\newcommand{\importfigures}{\boolean{true}}
\newcommand{\generatefigures}{\boolean{false}}
\ifthenelse{\importfigures}{}{%
\usepackage{tikz}%
\usepgflibrary{arrows}%
\usepackage{environ}%
}
\ifthenelse{\generatefigures}{%
\usetikzlibrary{external}%
\tikzexternalize\tikzset{external/force remake}%
}{}
\newcommand{\R}{\mathbb{R}}
\newcommand{\Q}{\mathbb{Q}}
\newcommand{\N}{\mathbb{N}}
\newcommand{\proj}[2]{\pi_{#1}(#2)}
\newcommand{\defref}[1]{definition~\ref{#1}}
\newcommand{\lemref}[1]{lemma~\ref{#1}}
\newcommand{\thref}[1]{theorem~\ref{#1}}
\newcommand{\corref}[1]{corollary~\ref{#1}}
\newcommand{\figref}[1]{figure~\ref{#1}}
\newcommand{\figrefmult}[1]{figures~{#1}}
\newcommand{\secref}[1]{section~\ref{#1}}
\newcommand{\sgn}[1]{\operatorname{sgn}(#1)}
\newcommand{\intp}{\operatorname{int}}
\newcommand{\fracp}{\operatorname{frac}}
\newcommand{\fiter}[2]{{#1}^{[#2]}}
\newcommand{\GSpace}[2]{\operatorname{GSPACE}\left(#1,#2\right)}
\newcommand{\lambdafun}[2]{(#1)\mapsto{#2}}
\newlength{\bracketwidth}
\newcommand{\inorm}[2]{{\left\lVert{#1}\right\rVert_{#2}}}
\newcommand{\twonorm}[1]{\inorm{#1}{2}}
\newcommand{\infnorm}[1]{\inorm{#1}{}}
\newcommand{\intinterv}[2]{\{#1,\ldots,#2\}}
\newcommand{\fnxi}[3]{tanh((#1)*(#2)*(#3))}
\newcommand{\fnsigmaone}[3]{(1+\fnxi{#1-1}{#2}{#3})/2}
\newcommand{\fnsigmanomult}[4]{%
    \ifthenelse{\numexpr#1\relax=1}%
    {\fnsigmaone{#2}{#3}{#4}}%
    {\fnsigmaone{#2-#1+1}{#3}{#4}+\fnsigmanomult{#1-1}{#2}{#3}{#4}}%
}
\newcommand{\fnsigmathree}[3]{%
    \fnsigmaone{#1}{#2+log(3)}{#3}+\fnsigmaone{#1-1}{#2+log(3)}{#3}+%
    \fnsigmaone{#1-2}{#2+log(3)}{#3}}
\newcommand{\fnsigmafour}[3]{%
    \fnsigmaone{#1}{#2+log(4)}{#3}+\fnsigmaone{#1-1}{#2+log(4)}{#3}+%
    \fnsigmaone{#1-2}{#2+log(4)}{#3}+\fnsigmaone{#1-3}{#2+log(4)}{#3}}
\newcommand{\fntheta}[2]{exp(-(#2)*(1-sin(2*pi*(#1)))**2)}
\begin{document}

\author{Olivier Bournez\inst{1} \and Daniel S. Gra\c{c}a\inst{2,3}
  \and Amaury Pouly\inst{1,2}}
\institute{
Ecole Polytechnique, LIX, 91128 Palaiseau Cedex, France. \\
\email{Olivier.Bournez@lix.polytechnique.fr}
\and  CEDMES/FCT, Universidade do Algarve, C. Gambelas, 8005-139 Faro, Portugal.
\email{dgraca@ualg.pt}
\and
SQIG /Instituto de Telecomunica\c{c}\~{o}es, Lisbon, Portugal.
}

\title{Turing machines can be efficiently simulated by the General Purpose Analog Computer}

\maketitle

\begin{abstract}
The Church-Turing thesis states that any sufficiently powerful computational model which captures the notion of algorithm is computationally equivalent to the Turing machine. This equivalence usually holds both at a computability level and at a computational complexity level modulo polynomial reductions. However, the situation is less clear in what concerns models of computation using real numbers, and no analog of the Church-Turing thesis exists for this case. Recently it was shown that some models of computation with real numbers were equivalent from a computability perspective. In particular it was shown that Shannon's General Purpose Analog Computer (GPAC) is equivalent to Computable Analysis. However, little is known about what happens at a computational complexity level. In this paper we shed some light on the connections between this two models, from a computational complexity level, by showing that, modulo polynomial reductions, computations of
Turing machines can be simulated by GPACs, without the need of using
more (space) resources than those used in the original Turing
computation, as long as we are talking about bounded computations. In other words, computations done by the GPAC are as space-efficient as computations done in the context of Computable Analysis.
\end{abstract}

\section{Introduction}
The Church-Turing thesis is a cornerstone statement in theoretical
computer science, stating that any (discrete time, digital)
sufficiently powerful computational
model which captures the notion of algorithm is computationally
equivalent to the Turing machine (see e.g.~\cite{Odi89},
\cite{Sip05}). It also relates various aspects of models in a very
surprising and strong way.

The Church-Turing thesis, although not formally a theorem, follows from many equivalence results for discrete models and is  considered to be valid by the scientific community \cite{Odi89}.
When considering non-discrete time or non-digital models, the
situation is far from being so clear. In particular, when considering
models working over real numbers, several models are clearly not
equivalent \cite{Bra00}.

However, a question of interest is whether physically \emph{realistic}
models of computation over the real numbers are equivalent, or can be
related. Some of the results of non-equivalence involve models, like
the BSS model \cite{BSS89}, \cite{BCSS98}, which are claimed not to be
physically realistic \cite{Bra00} (although they certainly are
interesting from an algebraic perspective),  or models that depend
critically of computations which use exact numbers to obtain
super-Turing power, e.g.~\cite{AM98}, \cite{BBKT01}.

Realistic models of computation over the reals clearly include the
\emph{General Purpose Analog Computer (GPAC)}  \cite{Sha41}, an analog
continuous-time model of computation and \emph{Computable Analysis} (see e.g. \cite{Wei00}).  The GPAC is a mathematical
model introduced by Claude Shannon of an earlier analog
computer, the Differential Analyzer. The first general-purpose
Differential Analyzer is generally attributed to Vannevar Bush
\cite{Bus31}. Differential Analyzers have been used intensively up to
the 1950's as computational machines to solve various problems from
ballistic to aircraft design, before the era of the digital computer
\cite{Nyc96}.

Computable analysis, based on Turing machines, can be considered as
today's most used model for talking about computability and complexity
over reals. In this approach, real numbers are encoded as sequences of
discrete quantities and a discrete model is used to compute over these
sequences. More details can be found in the books \cite{PR89},
\cite{Ko91}, \cite{Wei00}. As this model is based on classical
(digital and discrete time) models like Turing machines, which are considered to be realistic models of today's computers, one
can consider that Computable Analysis is a realistic model (or, more correctly, a theory)
of computation.

Understanding whether there could exist something similar to a
Church-Turing thesis models of computation involving real numbers, or whether
analog models of computation could be more powerful than today's
classical models of computation motivated us to try to relate GPAC computable
functions to functions computable in the sense of computable analysis.

The paper \cite{BCGH07} was a first step towards the objective of
obtaining a version of the Church-Turing thesis for physically
feasible models over the real numbers. This paper proves that, from a
computability perspective, Computable Analysis and the GPAC are
equivalent: GPAC computable functions are computable and, conversely,
functions computable by Turing machines or in the computable analysis
sense can be computed by GPACs.
However this is about \emph{computability}, and not
\emph{computational complexity}. This proves that one cannot solve more problems
using the GPAC than those we can solve using discrete-based approaches such as Computable Analysis. But this leaves open the question
whether one could solve some problems \emph{faster} using analog
models of computations (see e.g. what happens for quantum models of
computations\dots).   In other words, the question of whether the
above models are equivalent at a computational complexity level
remained open. Part of the difficulty stems from finding an
appropriate notion of complexity (see e.g.~\cite{SBF99}, \cite{BSF02})
for analog models of computations.

In the present paper we study both
the GPAC and Computable Analysis at a complexity level. In particular,
we introduce measures for space complexity and show that, using these
measures, both models are equivalent, even at a computational
complexity level, as long as we consider time-bounded simulations. Since we already have shown in our previous paper
\cite{BGP12} that Turing machines can simulate efficiently GPACs, this paper is a big step
towards showing the converse direction: GPACs can simulate Turing
machines in an efficient manner.

More concretely we show that computations of Turing machines can be simulated in polynomial space by GPACs as long as we use bounded (but arbitrary) time. We firmly believe that this construction can be used as a building brick to show the more general result that the computations of Turing machines can be simulated in polynomial space by GPACs, removing the hypothesis of arbitrary but fixed time. This latter construction would probably be much more involved, and we intend to focus on it in the near future since this result would show that computations done by the GPAC and in the context of Computable Analysis are equivalent modulo polynomial space reductions.

We believe that these results open the way for some sort of more general Church-Turing thesis, which applies not only to discrete-based models of computation but also to physically realistic models of computation, and which holds both at a computability and computational complexity (modulo polynomial reductions) level.

Incidently, these kind of results can also be the first step towards a well-founded complexity
theory for analog models of computations and for continuous dynamical
systems.

Notice that it has been observed in several papers that, since
continuous time systems might undergo arbitrary space and time
contractions, Turing machines, as well as even accelerating Turing
machines\footnote{Similar possibilities of simulating accelerating
  Turing machines through quantum mechanics are discussed in
  \cite{CP01}.} \cite{Davies01}, \cite{Cop98}, \cite{Cop02} or
even oracle Turing machines, can actually be simulated in an arbitrary
short time by ordinary differential equations in an arbitrary short
time or space. This is sometimes also called \textit{Zeno's
phenomenon}: an infinite number of discrete transitions may happen in
a finite time: see e.g. \cite{CIEChapter2007}.
Such constructions or facts have been deep obstacles
to various attempts to build a well founded complexity theory for
analog models of computations: see \cite{CIEChapter2007} for
discussions.
One way to interpret our results is then the following: all these time and
space phenomena, or Zeno's phenomena do not hold (or, at least, they do not hold in a
problematic manner) for ordinary differential equations
corresponding to GPACs, that is to say for \emph{realistic}
models, for carefully chosen measures of complexity. Moreover, these measures of complexity relate naturally to standard computational complexity measures involving discrete models of computation

\section{Preliminaries}

\subsection{Notation}

Throughout the paper we will use the following notation:
\[\infnorm{(x_1,\ldots,x_n)}=\max_{1\leqslant i\leqslant n}|x_i|\qquad
\twonorm{(x_1,\ldots,x_n)}=\sqrt{|x_1|^2+\cdots+|x_n|^2}\]
\[\proj{i}{x_1,\ldots,x_k}=x_i\qquad
\intp(x)=\lfloor x\rfloor\qquad
\fracp(x)=x-\lfloor x\rfloor\]
\[\intp_n(x)=\min(n,\intp(x))\qquad
\fracp_n(x)=x-\intp_n(x)\]
\[\fiter{f}{n}=
\begin{cases}
\operatorname{id} & \text{if }n=0\\
\fiter{f}{n-1} & \text{otherwise}
\end{cases}\]
\[\sgn{x}=\begin{cases}-1&\text{if }x<0\\0&\text{if }x=0\\1&\text{if
  }x>0\end{cases}\qquad 
\R^*=\R\setminus\{0\}\]

\subsection{Computational complexity measures for the GPAC}\label{section:GPAC}

It is known \cite{GC03} that a function is generable by a GPAC iff
it is a component of the solution a polynomial initial-value
problem. In other words, a function $f:I\rightarrow\R$ is
GPAC-generable  iff it belongs to following class. 

\begin{definition}\label{def:gpac_class}
Let $I\subseteq\R$ be an open interval and $f:I\rightarrow\R$. We say that $f\in\operatorname{GPAC}(I)$ if there exists $d\in\N$, a vector of polynomials $p$, $t_0\in I$ and $y_0\in\R^d$ such that for all $t\in I$ one has $f(t)=y_1(t)$, where $y:I\rightarrow\R$ is the unique solution over $I$ of
\begin{equation}
\left\{\begin{array}{@{}c@{}l}\dot{y}&=p(y)\\y(t_0)&=y_0\end{array}\right.
\end{equation}\label{eq:ode}
\end{definition}

Next we introduce a subclass of GPAC generable functions which
allow us to talk about space  complexity. The idea is that a function $f$ generated by a GPAC belongs to the class $\GSpace{I}{g}$ if $f$ can be generated by a GPAC in $I$ and does not grow faster that $g$. Since the value of $f$ in physical implementations of the GPAC correspond to some physical quantity (e.g.~electric tension), limiting the growth of $f$ corresponds to effectively limiting the size of resources needed to compute $f$ by a GPAC.

\begin{definition}\label{def:gspace_class}
Let $I\subseteq\R$ be an open interval and $f,g:I\rightarrow\R$ be functions. The function $f$
belongs to the class $\GSpace{I}{g}$ if there exist $d\in\N$, a vector of polynomials $p$, $t_0\in I$ and $y_0\in\R^d$ such that for all $t\in I$ one has $f(t)=y_1(t)$ and $\infnorm{y(t)}\leqslant g(t)$, where $y:I\rightarrow\R$ is the unique solution over $I$ of \eqref{eq:ode}. More generally, a function $f:I\rightarrow\R^d$ belongs to $f\in\GSpace{I}{g}$ if all its components are also in the same class.
\end{definition}

We can generalize the complexity class GSPACE to multidimensional open sets $I$ defined over $\R^d$. The idea is to reduce it to the one-dimensional case defined above through the introduction of a subset $J\subseteq\R$ and of a map $g:J\to I$.
\begin{definition}\label{def:gspace_ext_class}
Let $I\subseteq\R^d$ be an open set and $f,s_f:I\rightarrow\R$ be
functions. Then $f\in\GSpace{I}{s_f}$ if for any open interval
$J\subseteq\R$ and any function ($g:J\rightarrow\R^d \in\GSpace{J}{s_g}\text{ such that }g(J)\subseteq I$, one has $f\circ g\in\GSpace{J}{\max(s_g,s_f\circ s_g)}.$
\end{definition}

The following closure results can be proved (proofs are omitted for reasons of space).
\begin{lemma}\label{lem:gspace_ext_class_stable}
Let $I,J\subseteq\R^d$ be open sets, and $(f:I\rightarrow\R^n)$ and ($g:J\rightarrow\R^m$) be functions which belong to $\GSpace{I}{s_f}$ and $\GSpace{J}{s_g}$, respectively. Then:
\begin{itemize}
\item $f+g, f-g\in\GSpace{I\cap J}{s_f+s_g}$ if $n=m$.
\item $fg\in\GSpace{I\cap J}{\max(s_f,s_g,s_fs_g)}$ if $n=m$.
\item $f\circ g\in\GSpace{J}{\max(s_g,s_f\circ s_g)}$ if $m=d$ and $g(J)\subseteq I$.
\end{itemize}
\end{lemma}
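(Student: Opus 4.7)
My strategy is to prove each of the three parts by unwinding \defref{def:gspace_ext_class} and reducing to the one-dimensional statement \lemref{lem:gspace_class_stable}. In every case, fix an arbitrary open interval $K\subseteq\R$ together with a test map $h\in\GSpace{K}{s_h}$, $h:K\to\R^d$, whose image lies in the relevant domain, and verify that the candidate composition sits in $\GSpace{K}{\cdot}$ with a bound of the shape \defref{def:gspace_ext_class} demands. In other words, once the universal test property is checked for each of $f+g$, $f-g$, $fg$ and $f\circ g$, the conclusion is immediate.

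For the sum, difference, and product (with $n=m$ and $h(K)\subseteq I\cap J$): feed $h$ separately into the extended hypotheses on $f$ and on $g$, obtaining $f\circ h\in\GSpace{K}{\max(s_h,s_f\circ s_h)}$ and $g\circ h\in\GSpace{K}{\max(s_h,s_g\circ s_h)}$. These are now functions of a single real variable, so \lemref{lem:gspace_class_stable} applies directly to $(f+g)\circ h = f\circ h + g\circ h$ and $(fg)\circ h=(f\circ h)(g\circ h)$. Elementary majorisations of the form $\max(a,b)+\max(c,d)\leqslant\max(a,c)+\max(b,d)$ and $\max(a,b)\max(c,d)\leqslant\max(ac,ad,bc,bd)$ then collapse the resulting expression into a bound dominated by $\max(s_h,(s_f+s_g)\circ s_h)$ in the additive case and by $\max(s_h,s_f\circ s_h,s_g\circ s_h,(s_fs_g)\circ s_h)$ in the multiplicative case. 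Since $\GSpace{\cdot}{\cdot}$ is upward closed in its bound argument, this matches the shape required by \defref{def:gspace_ext_class}.

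For composition (with $m=d$ and $g(J)\subseteq I$): chain the two extended hypotheses. First, $g$'s property applied to $h$ yields $g\circ h\in\GSpace{K}{\max(s_h,s_g\circ s_h)}$, a function $K\to\R^d$ with image in $I$; then $f$'s property applied with $g\circ h$ as the new test map gives $(f\circ g)\circ h = f\circ(g\circ h)\in\GSpace{K}{\max(M,s_f\circ M)}$, where $M:=\max(s_h,s_g\circ s_h)$. Pushing $s_f$ through the inner max (legitimate after replacing $s_f$ by its non-decreasing envelope, which only enlarges the bound) rewrites this as $\max(s_h,\max(s_g,s_f\circ s_g)\circ s_h)$, which is precisely what \defref{def:gspace_ext_class} requires for $f\circ g\in\GSpace{J}{\max(s_g,s_f\circ s_g)}$.

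The single genuine obstacle is bookkeeping: in none of the three cases does the bound produced literally equal the shape that \defref{def:gspace_ext_class} asks for, and a short inequality is needed in each case to massage one into the other. This is why I rely on the upward closure of $\GSpace{\cdot}{\cdot}$ in its bound argument, together with non-decreasing envelopes of the bound functions, both of which are mild and standard in this setting and cost nothing qualitatively.
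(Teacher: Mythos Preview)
Your approach—unwinding \defref{def:gspace_ext_class} and reducing each case to the one-variable closure properties of \lemref{lem:gspace_class_stable}—is correct and is exactly what the paper intends: the paper in fact gives no explicit proof of this lemma (it is not among the appendix proofs), treating it as an immediate consequence of those two ingredients. Your flagging of the bound-bookkeeping issues (upward closure, monotone envelopes needed to coerce $\max(M,s_f\circ M)$ into the required shape) is, if anything, more careful than the paper itself, which works at a level of precision where such discrepancies are absorbed silently.
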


\subsection{Main result}

Our main result states that any Turing machine can be simulated by a
GPAC using a space bounded by a polynomial, where $T$ and $S$ are respectively the
time and the space used by the Turing machine.


If one prefers, (formal statement in Theorem
\ref{th:tm_simual_diff_eq}): 

\begin{theorem}
Let $\mathcal{M}$ be a Turing Machine. Then there is a GPAC-generable function $f_M$ and a polynomial $p$ with the following properties:
\begin{enumerate}
\item Let $S,T$ be arbitrary positive integers. Then $f_\mathcal{M}(S,T,[e],n)$ gives  the configuration of $\mathcal{M}$ on
input $e$ at step $n$, as long as $n\leq T$ and $\mathcal{M}$ uses space bounded by $S$.
\item $f_\mathcal{M}(S,T,[e],t)$ is bounded by $p(T+S)$ as long as $0\leq t\leq n$.
\end{enumerate}
\end{theorem}

The first condition of the theorem states that the GPAC simulates TMs on bounded space and time, while the second condition states that amount of resources used by the GPAC computation is polynomial on the amount of resources used by original Turing computation. 

\section{The construction}

\subsection{Helper functions}

Our simulation will be performed on a real domain and may be subject to (small) errors. Thus, to simulate a Turing machine over a large number of steps, we need tools which allow us to keep errors under control. In this section we present functions which are specially designed to fulfill this objective. We call these functions \emph{helper functions}. Notice that since functions generated by GPACs are analytic, all helper functions are required to be analytic. As a building block for creating more complex functions, it will be useful to obtain analytic approximations of the functions $\intp(x)$ and $\fracp(x)$. Notice that we are only concerned about nonnegative numbers so there is no need to discuss the definition of these functions on negative numbers. A graphical representation of the various helper functions we will introduce in this section can be found on \figrefmult{\ref{fig:xi},\ref{fig:sigma} and \ref{fig:theta}}. Proofs within this section are ommited for reasons of space.

\begin{definition}\label{def:xi}
For any $x,y,\lambda\in\R$ define
$\xi(x,y,\lambda)=\tanh(xy\lambda)$. 
\end{definition}

\begin{lemma}\label{lem:xi}
For any $x\in\R$ and $\lambda>0,y\geqslant1$,
\[|\sgn{x}-\xi(x,y,\lambda)|<1\]
Furthermore if $|x|\geqslant\lambda^{-1}$ then
\[|\sgn{x}-\xi(x,y,\lambda)|<e^{-y}\]
and $\xi\in\GSpace{\R^{3}}{1}$.
\end{lemma}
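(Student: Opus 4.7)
The proof decomposes naturally into two independent tasks: bounding the approximation error $|\sgn{x}-\xi(x,y,\lambda)|$, and certifying GSPACE membership $\xi\in\GSpace{\R^3}{1}$.

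For the error bounds, I would use the standard identity $1-\tanh(u)=\frac{2}{e^{2u}+1}\leqslant 2e^{-2u}$ valid for $u\geqslant 0$, together with its symmetric counterpart $1+\tanh(u)\leqslant 2e^{2u}$ for $u\leqslant 0$ (both following directly from the definition $\tanh(u)=(e^u-e^{-u})/(e^u+e^{-u})$). Splitting into cases on the sign of $x$ (the case $x=0$ is immediate since $\tanh(0)=0=\sgn{0}$), these combine into the single bound $|\sgn{x}-\xi(x,y,\lambda)|\leqslant 2e^{-2|x|y\lambda}$ whenever $x\neq 0$.

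Under the extra hypothesis $|x|\geqslant\lambda^{-1}$ and $y\geqslant 1$, the exponent $2|x|y\lambda$ is at least $2y$, so $|\sgn{x}-\xi(x,y,\lambda)|\leqslant 2e^{-2y}$. Since $y\geqslant 1>\ln 2$, one has $2\leqslant e^{y}$ and hence $2e^{-2y}\leqslant e^{-y}$, which yields the sharper inequality. The weaker $1/2$ bound is then obtained from $2e^{-2y}<1/2$ for $y\geqslant 1$ (equivalently, $e^{2y}>4$, which holds since $y\geqslant 1>\ln 2$).

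For the space-complexity claim, the inner argument $(x,y,\lambda)\mapsto xy\lambda$ defines a polynomial $p:\R^3\to\R$, and $\tanh\in\GSpace{\R}{1}$ by \lemref{lem:elem_fun_gspace}. Applying \lemref{lem:gspace_ext_comp_poly} to the composition $\xi=\tanh\circ p$ gives $\xi\in\GSpace{\R^3}{s_{\tanh}\circ p}=\GSpace{\R^3}{1}$, because the space bound $s_{\tanh}\equiv 1$ is a constant function and composition with a polynomial preserves this. The only delicate point I anticipate is constant juggling in the exponent: absorbing the factor $2$ in $2e^{-2y}$ into the passage to $e^{-y}$ requires precisely the hypothesis $y\geqslant 1$, and handling the sign of $x$ uniformly requires careful bookkeeping of the two symmetric tail estimates of $\tanh$.
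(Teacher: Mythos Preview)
Your approach matches the paper's: both reduce via oddness of $\sgn$ and $\tanh$, use an exponential tail bound for $1-\tanh$, and invoke \lemref{lem:elem_fun_gspace} together with \lemref{lem:gspace_ext_comp_poly} for the GSPACE claim; the only cosmetic difference is that the paper uses $1-\tanh(t)\leqslant e^{-t}$ for $t\geqslant1$ directly, while you track $2e^{-2u}$ and then absorb the factor $2$.

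One caveat worth flagging: your derivation of the unconditional $1/2$ bound actually passes through the estimate $2e^{-2y}$, which you obtained only under the extra hypothesis $|x|\geqslant\lambda^{-1}$, so it does not cover arbitrary $x$. The paper's proof has exactly the same omission (indeed the first inequality as stated fails when $x$ is nonzero and close to $0$, since then $|\sgn{x}-\tanh(xy\lambda)|$ is close to $1$), so you are reproducing precisely what the paper actually establishes rather than what it claims.
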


\begin{figure}
\begin{center}
\ifthenelse{\importfigures}{%
\includegraphics{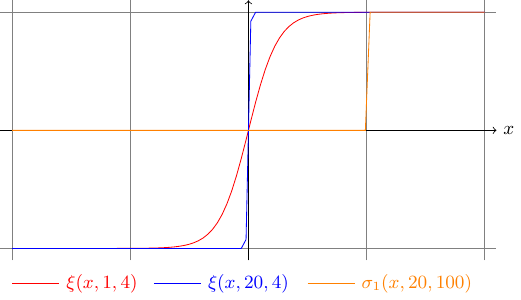}}{%
\begin{tikzpicture}[domain=-2:2,samples=100,scale=2]
\draw[very thin,color=gray] (-2.1,-1.1) grid (2.1,1.1);
\draw[->] (0,-1.1) -- (0,1.1);
\draw[->] (-2.1,0) -- (2.1,0) node[right] {$x$};
\draw[color=red] plot[id=fn_xi_1] function{\fnxi{x}{1}{4}};
\draw[color=blue] plot[id=fn_xi_2] function{\fnxi{x}{20}{4}};
\draw[color=orange] plot[id=fn_sigma1_1] function{\fnsigmaone{x}{20}{100}};
\draw[color=red] (-2,-1.3) -- (-1.6,-1.3) node[right] {$\xi(x,1,4)$};
\draw[color=blue] (-0.8,-1.3) -- (-0.4,-1.3) node[right] {$\xi(x,20,4)$};
\draw[color=orange] (0.5,-1.3) -- (0.9,-1.3) node[right] {$\sigma_1(x,20,100)$};
\end{tikzpicture}}
\end{center}
\caption{Graphical representation of $\xi$ and $\sigma_1$}
\label{fig:xi}
\end{figure}

\begin{definition}
For any $x,y,\lambda\in\R$, define
\[\sigma_1(x,y,\lambda)=\frac{1+\xi(x-1,y,\lambda)}{2}\]
\end{definition}

\begin{corollary}\label{cor:sigma1}
For any $x\in\R$ and $y>0,\lambda>2$,
\[|\intp_1(x)-\sigma_1(x,y,\lambda)|\leqslant1/2\]
Furthermore if $|1-x|\geqslant\lambda^{-1}$ then
\[|\intp_1(x)-\sigma_1(x,y,\lambda)|<e^{-y}\]
and $\sigma_1\in\GSpace{\R^{3}}{1}$.
\end{corollary}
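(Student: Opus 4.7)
The proof breaks into three parts: the uniform $1/2$ bound, the sharp $e^{-y}$ bound when $x$ is far from $1$, and membership in $\GSpace{\R^3}{1}$. The first two will follow by case analysis plus \lemref{lem:xi} applied to $x-1$, while the third needs a direct ODE construction and is where the work lies.

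For the uniform bound, the key observation is that for $y>0$ and $\lambda>2$ the function $\sigma_1(x,y,\lambda)=\frac{1+\tanh((x-1)y\lambda)}{2}$ takes values in $(0,1)$, equals $1/2$ precisely at $x=1$, and satisfies $\sigma_1\geqslant 1/2\Leftrightarrow x\geqslant 1$. Since the relevant regime here is $x\geqslant 0$ (so $\intp_1(x)\in\{0,1\}$), a direct case split on $x<1$, $x=1$, $x>1$ gives $|\intp_1(x)-\sigma_1(x,y,\lambda)|\leqslant 1/2$, with equality attained only at $x=1$. For the sharp bound, assume $|1-x|\geqslant\lambda^{-1}$. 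Then $|x-1|\geqslant\lambda^{-1}$ and \lemref{lem:xi} (applied to $x-1$) yields $|\sgn{x-1}-\xi(x-1,y,\lambda)|<e^{-y}$. Since $\lambda^{-1}<1/2$, we have $x>1$ or $x<1$ strictly; in both cases (with $x\geqslant 0$) one checks $\intp_1(x)=\frac{1+\sgn{x-1}}{2}$, so
\[|\intp_1(x)-\sigma_1(x,y,\lambda)|=\tfrac{1}{2}|\sgn{x-1}-\xi(x-1,y,\lambda)|<\tfrac{1}{2}e^{-y}<e^{-y}.\]

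For the $\GSpace$ membership, the naive route through \lemref{lem:gspace_ext_class_stable} (starting from $\xi\in\GSpace{\R^3}{1}$ and performing the affine transformation $z\mapsto(1+z)/2$) fails, because the stability lemmas combine bounds additively or multiplicatively and would inflate the bound past $1$. Instead, I will argue directly from \defref{def:gspace_ext_class}: given any open $J\subseteq\R$ and any $g:J\to\R^3$ in $\GSpace{J}{s_g}$, I augment the polynomial ODE generating $g$ with a new state variable $w$ meant to represent $\sigma_1\circ g$. Writing $u=(g_1-1)g_2g_3$ and using the identity $1-\tanh^2=4w(1-w)$ together with the chain rule, one gets
\[\dot{w}=2w(1-w)\bigl(\dot{g}_1 g_2 g_3+(g_1-1)\dot{g}_2 g_3+(g_1-1)g_2\dot{g}_3\bigr),\]
which is polynomial in $w$ and in the state of the ODE generating $g$ (since each $\dot g_i$ is itself polynomial in that state). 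The initial condition $w(t_0)=\sigma_1(g(t_0))$ is a fixed number in $(0,1)$. The extended system is therefore a polynomial ODE; its old components remain bounded by $s_g$, and the new component $w$ lies in $(0,1)$, so the full state is bounded by $\max(s_g,1)$. This is exactly what \defref{def:gspace_ext_class} demands for $\sigma_1\in\GSpace{\R^3}{1}$.

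The main obstacle is this last step: one has to resist the temptation to derive the $\GSpace$ claim mechanically from \lemref{lem:gspace_ext_class_stable} and instead exploit the logistic structure $\dot w=2w(1-w)\dot u$, which is precisely what allows the auxiliary state introduced for $\sigma_1$ to remain bounded by $1$ while the ODE stays polynomial.
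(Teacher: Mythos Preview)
Your proof is correct. For the two inequality bounds you argue essentially as the paper does (the paper simply records $\intp_1(x)=\tfrac12\sgn{x-1}+\tfrac12$ and invokes \lemref{lem:xi}); your case analysis is just a more explicit unpacking of the same idea.

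For the $\GSpace$ claim you take a genuinely different route. You dismiss the stability lemmas and build a fresh polynomial ODE for $w=\sigma_1\circ g$ via the logistic identity $1-\tanh^2 u=4w(1-w)$; this is correct and yields the bound $\max(s_g,1)$ directly from the invariant $w\in(0,1)$. The paper, however, \emph{does} go through the stability lemmas---but via the composition rule, not the additive/multiplicative ones you worried about. Writing $p(X)=\tfrac{1+X}{2}$ and $q(X,Y,Z)=(X-1,Y,Z)$, one has $\sigma_1=p\circ\xi\circ q$; \lemref{lem:gspace_ext_comp_poly} gives $\xi\circ q\in\GSpace{\R^3}{1}$, and then \lemref{lem:gspace_ext_class_stable} (the $\circ$ case, with $p\in\GSpace{\R}{p}$) gives $\sigma_1\in\GSpace{\R^3}{\max(1,\,p\circ 1)}$, which equals $\GSpace{\R^3}{1}$ precisely because $p(1)=1$. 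So your claim that the ``naive route fails'' is only half right: it fails through the $+$ and $\times$ rules (which would indeed produce $3/2$), but succeeds through the $\circ$ rule thanks to the lucky arithmetic $p(1)=1$. Your direct construction is more self-contained and makes the mechanism ($w$ trapped in $(0,1)$) transparent; the paper's route is shorter once one is willing to invoke the composition lemma with the right bound on $p$.
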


\begin{definition}
For any $p\in\N$, $x,y,\lambda\in\R$, define
\[\sigma_p(x,y,\lambda)=\sum_{i=0}^{k-1}\sigma_1(x-i,y+\ln p,\lambda)\]
\end{definition}

\begin{lemma}\label{lem:sigmap}
For any $p\in\N$, $x\in\R$ and $y>0,\lambda>2$,
\[|\intp_p(x)-\sigma_p(x,y,\lambda)|\leqslant1/2+e^{-y}\]
Furthermore if $x<1-\lambda^{-1}$ or $x>p+\lambda^{-1}$ or $d(x,\N)>\lambda^{-1}$ then
\[|\intp_p(x)-\sigma_p(x,y,\lambda)|<e^{-y}\]
and $\sigma_p\in\GSpace{\R^{3}}{p}$.
\end{lemma}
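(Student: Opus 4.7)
The plan is to exploit the combinatorial identity $\intp_p(x)=\sum_{i=0}^{p-1}\intp_1(x-i)$ (valid for $x\geq 0$, the regime of interest per the paper's convention), approximate each shifted step by $\sigma_1$, and control the aggregated error by noting that the ``weak'' regime of \corref{cor:sigma1} can only occur near one of the integers $1,\ldots,p$ at a time. The main obstacle will be this disjointness argument, which crucially uses $\lambda>2$ to make the intervals $(i+1-\lambda^{-1},i+1+\lambda^{-1})$ pairwise disjoint, so that at most one index contributes the crude error $1/2$.

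To verify the identity, I would write $x\in[j,j+1)$ and observe that $\intp_1(x-i)=1$ exactly when $i\leq j-1$ and equals $0$ otherwise, so the sum equals $\min(j,p)=\intp_p(x)$. This justifies the definition of $\sigma_p$ as a sum of translates of $\sigma_1$. Then I would apply \corref{cor:sigma1} termwise with parameters $(x-i,y+\ln p,\lambda)$: since at most one index $i\in\{0,\ldots,p-1\}$ can be ``bad'' (i.e.\ satisfy $|x-(i+1)|<\lambda^{-1}$), the total error is bounded by $1/2$ from that index plus $(p-1)\cdot e^{-(y+\ln p)}=(p-1)e^{-y}/p$ from the remaining ``good'' ones, giving $<1/2+e^{-y}$. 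The choice of $y+\ln p$ as the second argument is precisely what compensates for summing $p$ terms.

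For the strong bound, I would verify that each of the three hypotheses eliminates every bad index: $x<1-\lambda^{-1}$ forces $(i+1)-x\geq 1-x>\lambda^{-1}$ for every $i\geq 0$; $x>p+\lambda^{-1}$ forces $x-(i+1)\geq x-p>\lambda^{-1}$ since $i+1\leq p$; and $d(x,\N)>\lambda^{-1}$ is immediate. With all $p$ terms then strictly bounded by $e^{-y}/p$, summing the $p$ strict inequalities gives the strict total $<e^{-y}$.

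Finally, for the GSPACE claim, each summand $\sigma_1(x-i,y+\ln p,\lambda)$ is $\sigma_1\in\GSpace{\R^3}{1}$ precomposed with an affine map, which by \lemref{lem:gspace_ext_comp_poly} remains in $\GSpace{\R^3}{1}$ (the constant space bound $1$ is unchanged under polynomial composition). Iterating the sum rule of \lemref{lem:gspace_ext_class_stable} over the $p$ summands, each with space bound $1$, yields $\sigma_p\in\GSpace{\R^3}{p}$, matching the claim.
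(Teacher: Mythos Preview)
Your proposal is correct and follows essentially the same approach as the paper: decompose $\intp_p$ as $\sum_{i=0}^{p-1}\intp_1(x-i)$, apply \corref{cor:sigma1} termwise with the shifted precision $y+\ln p$, observe that at most one index can fall in the weak regime (your explicit use of $\lambda>2$ to get disjointness of the bad intervals makes this step cleaner than the paper's case split), and then assemble the $\operatorname{GSPACE}$ bound via \lemref{lem:gspace_ext_comp_poly} and iterated \lemref{lem:gspace_ext_class_stable}.
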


\begin{figure}
\begin{minipage}{0.47\linewidth}%
\begin{center}
\ifthenelse{\importfigures}{%
\includegraphics{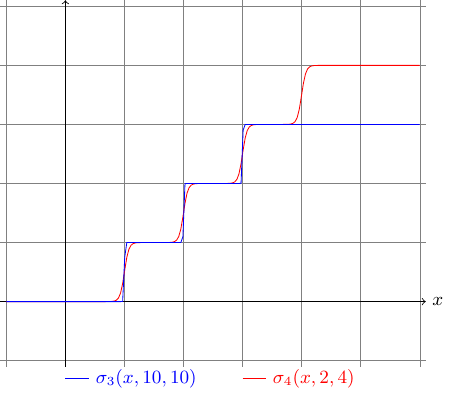}}{%
\begin{tikzpicture}[domain=-1:6,samples=200,scale=1,
    declare function={
        sigma3(\x,\y,\l)=sigma1(\x,\y*3,\l)+sigma1(\x-1,\y*3,\l)+sigma1(\x-2,\y*3,\l);
        sigma4(\x,\y,\l)=sigma1(\x,\y*4,\l)+sigma1(\x-1,\y*4,\l)+sigma1(\x-2,\y*4,\l)+sigma1(\x-3,\y*4,\l);}]
\draw[very thin,color=gray] (-1.1,-1.1) grid (6.1,5.1);
\draw[->] (0,-1.1) -- (0,5.1);
\draw[->] (-1.1,0) -- (6.1,0) node[right] {$x$};
\draw[color=red] plot[id=fn_sigma4_1] function{\fnsigmafour{x}{2}{4}};
\draw[color=blue] plot[id=fn_sigma3_1] function{\fnsigmathree{x}{10}{10}};
\draw[color=red] (3,-1.3) -- (3.4,-1.3) node[right] {$\sigma_4(x,2,4)$};
\draw[color=blue] (0,-1.3) -- (0.4,-1.3) node[right] {$\sigma_3(x,10,10)$};
\end{tikzpicture}}
\end{center}
\caption{Graphical representation of $\sigma_p$}
\label{fig:sigma}
\end{minipage}%
\quad
\begin{minipage}{0.45\linewidth}%
\begin{center}
\ifthenelse{\importfigures}{%
\includegraphics{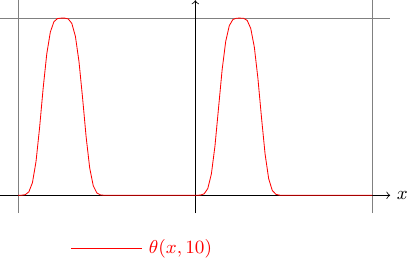}}{%
\begin{tikzpicture}[domain=-1:1,samples=100,scale=3]
\draw[very thin,color=gray] (-1.1,-0.1) grid (1.1,1.1);
\draw[->] (0,-0.1) -- (0,1.1);
\draw[->] (-1.1,0) -- (1.1,0) node[right] {$x$};
\draw[color=red] plot[id=fn_theta_1] function{\fntheta{x}{10}};
\draw[color=red] (-0.7,-0.3) -- (-0.3,-0.3) node[right] {$\theta(x,10)$};
\end{tikzpicture}}
\end{center}
\caption{Graphical representation of $\theta$}
\label{fig:theta}
\end{minipage}%
\end{figure}

Finally, we build a square wave like function which we be useful later on.

\begin{definition}
For any $t\in\R$, and $\lambda>0$, define
$\theta(t,\lambda)=e^{-\lambda\left(1-\sin\left(2\pi t\right)\right)^2}$
\end{definition}

\begin{lemma}\label{lem:theta}
For any $\lambda>0$, $\theta(\cdot,\lambda)$ is a positive and $1$-periodic function bounded by $1$, furthermore
\[\forall t\in[1/2,1], |\theta(t,\lambda)|\leqslant \frac{e^{-\lambda}}{2}\]
\[\int_0^{\frac{1}{2}}\theta(t,\lambda)dt\geqslant\frac{(e\lambda)^{-\frac{1}{4}}}{\pi}\]
and $\theta\in\GSpace{\R\times\R^{*}_{+}}{\lambdafun{t,\lambda}{\max(1,\lambda)}}$.
\end{lemma}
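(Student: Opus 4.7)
The proof of \lemref{lem:theta} divides into four parts, one for each assertion, all of which flow from the explicit formula $\theta(t,\lambda)=e^{-\lambda(1-\sin(2\pi t))^2}$.

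First, positivity, $1$-periodicity and the pointwise bound $\theta\leq 1$ are immediate since $\theta$ is an exponential of a nonpositive, $1$-periodic expression. For $t\in[1/2,1]$ we have $2\pi t\in[\pi,2\pi]$, so $\sin(2\pi t)\leq 0$, hence $1-\sin(2\pi t)\geq 1$ and therefore $\theta(t,\lambda)\leq e^{-\lambda}$.

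For the integral lower bound, the plan is to localize near the unique zero of $1-\sin(2\pi t)$ on $[0,1/2]$, which is $t=1/4$. Substituting $u=2\pi t-\pi/2$ converts $\sin(2\pi t)$ into $\cos u$, giving
\[\int_0^{1/2}\theta(t,\lambda)\,dt=\frac{1}{\pi}\int_0^{\pi/2}e^{-\lambda(1-\cos u)^2}\,du.\]
Using $1-\cos u\leq u^2/2$, the integrand is bounded below by $e^{-\lambda u^4/4}$. After the substitution $v=(\lambda/4)^{1/4}u$ this becomes $\frac{\sqrt{2}}{\pi\lambda^{1/4}}\int_0^{B(\lambda)}e^{-v^4}\,dv$ with $B(\lambda)=(\pi/2)(\lambda/4)^{1/4}$. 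It then remains to check that this truncated integral exceeds $e^{-1/4}/\sqrt{2}$ in the relevant regime: once $B(\lambda)$ passes a small threshold, the full value $\int_0^\infty e^{-v^4}\,dv=\Gamma(5/4)\approx 0.906$ comfortably dominates the required $\approx 0.551$, and for the remaining $\lambda$ range, a direct pointwise lower bound on $\theta$ near $t=1/4$ closes the gap.

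Finally, to establish $\theta\in\GSpace{\R\times\R_+^*}{\lambdafun{t,\lambda}{\max(1,\lambda)}}$, I would realize $\theta$ as a component of a polynomial ODE system. Introduce auxiliary components $s(t)=\sin(2\pi t)$ and $c(t)=\cos(2\pi t)$, which satisfy the polynomial ODE $\dot{s}=2\pi c$, $\dot{c}=-2\pi s$ and are bounded by $1$; then $u=1-s$ is bounded by $2$, and differentiating $\theta$ yields $\dot{\theta}=-2\lambda u\dot{u}\theta=4\pi\lambda u c\theta$, so in particular $|\theta|\leq 1$. Treating $\lambda$ as an extra state variable with $\dot{\lambda}=0$ (or, in the extended-class sense of \defref{def:gspace_ext_class}, as an input to be substituted), every component of the system is bounded by $\max(1,\lambda)$, and the claim follows by combining \lemref{lem:elem_fun_gspace}, \lemref{lem:gspace_ext_comp_poly} and \lemref{lem:gspace_ext_class_stable} to handle arbitrary GPAC-computable substitutions of the inputs $(t,\lambda)$. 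The main obstacle is controlling the multiplicative constants in the integral estimate, particularly recovering the exact factor $e^{-1/4}/\pi$; the substitution above reduces this to a sharp but elementary lower bound on a truncated Gamma-type integral.
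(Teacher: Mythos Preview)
Your treatment of positivity, $1$-periodicity, the bound on $[1/2,1]$, and the $\operatorname{GSPACE}$ membership matches the paper's argument essentially line for line.

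For the integral lower bound you take a genuinely different route. The paper uses a level-set estimate: it fixes a threshold $\alpha\in(0,1)$, lower-bounds the length of $\{t\in[0,1/4]:\theta(t,\lambda)\geq\alpha\}$ via the elementary inequality $\arcsin(1-x)\leq\tfrac{\pi}{2}-\sqrt{2x}$, obtains $\int_0^{1/4}\theta\geq h(\alpha)=\tfrac{\alpha}{2\pi}\sqrt{2\sqrt{(-\ln\alpha)/\lambda}}$, and then optimizes over $\alpha$; the maximizer $\alpha=e^{-1/4}$ yields exactly $(e\lambda)^{-1/4}/(2\pi)$, and doubling gives the claim. Your substitution $u=2\pi t-\pi/2$ together with $1-\cos u\leq u^2/2$ is arguably cleaner: it avoids inverse trigonometric functions and the optimization, and reduces everything to the single numerical inequality $\Gamma(5/4)>e^{-1/4}/\sqrt{2}$. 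The trade-off is that the paper's optimization lands on the stated constant exactly, whereas your method recovers it only after a separate numerical check on the truncated integral $\int_0^{B(\lambda)}e^{-v^4}\,dv$.

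One point deserves a flag. Your proposed fallback ``a direct pointwise lower bound on $\theta$ near $t=1/4$ closes the gap'' for small $\lambda$ cannot work: as $\lambda\to 0^+$ the left-hand side $\int_0^{1/2}\theta(t,\lambda)\,dt$ tends to $1/2$ while the right-hand side $(e\lambda)^{-1/4}/\pi$ diverges, so the inequality as stated is simply false for small $\lambda$. The paper's proof has the same hidden restriction (its level set must remain inside $[0,1/4]$, which forces $\lambda\geq(2/\pi)^4$), and downstream the lemma is only ever invoked with parameter $B=4(\lambda+\mu)\geq 4$, so nothing in the paper is harmed. But you should not claim the small-$\lambda$ case can be patched; the honest statement is that the bound holds for $\lambda$ bounded away from zero, which is all that is needed.
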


\subsection{Polynomial interpolation}

In order to implement the transition function of the Turing Machine, we will use polynomial interpolation techniques (Lagrange interpolation). But since our simulation may have to deal with some amount of error in inputs, we have to investigate how this error propagates through the interpolating polynomial.

\begin{lemma}\label{lem:diff_prod}
Let $n\in\N$, $x,y\in\R^n$, $K>0$ be such that $\infnorm{x},\infnorm{y}\leqslant K$, then
\[\left|\prod_{i=1}^nx_i-\prod_{i=1}^ny_i\right|\leqslant K^{n-1}\sum_{i=1}^n|x_i-y_i|\]
\end{lemma}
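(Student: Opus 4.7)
The plan is to prove the inequality by induction on $n$, using the standard telescoping identity for products. The base case $n=1$ is immediate since it reduces to $|x_1-y_1|\leqslant K^0|x_1-y_1|$. For the inductive step, I would assume the bound holds for $n-1$ and write
\[
\prod_{i=1}^n x_i-\prod_{i=1}^n y_i = x_n\left(\prod_{i=1}^{n-1}x_i-\prod_{i=1}^{n-1}y_i\right)+(x_n-y_n)\prod_{i=1}^{n-1}y_i.
\]

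Applying the triangle inequality and using $|x_n|\leqslant K$, $|y_i|\leqslant K$ for $i<n$, together with the induction hypothesis, I would get
\[
\left|\prod_{i=1}^n x_i-\prod_{i=1}^n y_i\right|\leqslant K\cdot K^{n-2}\sum_{i=1}^{n-1}|x_i-y_i|+K^{n-1}|x_n-y_n|=K^{n-1}\sum_{i=1}^n|x_i-y_i|.
\]

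An alternative presentation is to skip induction and write the single telescoping expansion
\[
\prod_{i=1}^n x_i-\prod_{i=1}^n y_i=\sum_{k=1}^n\left(\prod_{i<k}y_i\right)(x_k-y_k)\left(\prod_{i>k}x_i\right),
\]
and then bound each of the $n-1$ factors other than $(x_k-y_k)$ by $K$. There is no real obstacle here; the only thing to be slightly careful about is the edge cases in the empty products (when $k=1$ or $k=n$), which are handled by the convention that an empty product equals $1$, consistent with the factor $K^{n-1}$ rather than $K^n$.
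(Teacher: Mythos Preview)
Your proof is correct and essentially identical to the paper's: both proceed by induction on $n$ via the same telescoping identity (the paper factors as $(x_{n+1}-y_{n+1})\prod_{i\le n}x_i + y_{n+1}\bigl(\prod_{i\le n}x_i-\prod_{i\le n}y_i\bigr)$, which is just the mirror of your decomposition). The alternative one-shot telescoping sum you mention is also fine and equivalent.
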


\begin{definition}[Lagrange polynomial]\label{def:lagrange_poly}
Let $d\in\N$ and $f:G\rightarrow\R$ where $G$ is a finite subset of $\R^d$, we define
\[L_f(x)=\sum_{\bar{x}\in G}f(\bar{x})\prod_{i=1}^d\prod_{\substack{y\in G\\y\neq\bar{x}}}\frac{x_i-y_i}{\bar{x}_i-y_i}\]
\end{definition}

We recall that by definition, for all $x\in G$, $L_f(x)=f(x)$ so the interesting part is to know what happen for values of $x$ not in $G$ but close to $G$, that is to relate $L_f(x)-L_f(\tilde{x})$ with $x-\tilde{x}$.

\begin{lemma}\label{lem:interp_L}
Let $d\in\N$, $K>0$ and $f:G\rightarrow\R$, where $G$ is a finite subset of $\R^d$. Then
\[\forall x,z\in[-K,K]^d, |L_f(x)-L_f(z)|\leqslant A\infnorm{x-z}\qquad\text{and}\qquad L_f\in\GSpace{[-K,K]^d}{B}\]
where
\[\delta=\min_{x\neq x'\in G}\min_{i=1}^d|x_i-x'_i|\qquad F=\max_{x\in G}|f(x)|\qquad M=K+\max_{x\in G}\infnorm{x}\]
\[A=|G|F\left(\frac{M}{\delta}\right)^{d(|G|-1)-1}d(|G|-1)\qquad B=|G|F\left(\frac{M}{\delta}\right)^{d(|G|-1)}\]
\end{lemma}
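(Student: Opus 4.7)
The plan is to establish the three claims of the lemma in sequence: the sup-norm bound $|L_f|\leqslant B$ on $[-K,K]^d$, the Lipschitz estimate with constant $A$, and the membership $L_f\in\GSpace{[-K,K]^d}{B}$. Throughout, it is convenient to write each Lagrange basis polynomial as the product
\[L_{\bar{x}}(x)=\prod_{(i,y)}u_{i,y}(x),\qquad u_{i,y}(x)=\frac{x_i-y_i}{\bar{x}_i-y_i},\]
where the index $(i,y)$ ranges over $\{1,\ldots,d\}\times(G\setminus\{\bar{x}\})$, yielding $d(|G|-1)$ factors.

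First I would prove the sup-norm bound. For $x\in[-K,K]^d$ and $y\in G$, the numerator satisfies $|x_i-y_i|\leqslant K+\infnorm{y}\leqslant M$, while the denominator $|\bar{x}_i-y_i|\geqslant\delta$ by the definition of $\delta$. Hence each $|u_{i,y}(x)|\leqslant M/\delta$, so $|L_{\bar{x}}(x)|\leqslant(M/\delta)^{d(|G|-1)}$, and summing over $\bar{x}\in G$ with $|f(\bar{x})|\leqslant F$ yields $|L_f(x)|\leqslant B$. For the Lipschitz estimate, I would apply \lemref{lem:diff_prod} to the two tuples $(u_{i,y}(x))_{(i,y)}$ and $(u_{i,y}(z))_{(i,y)}$, both of length $d(|G|-1)$ and uniformly bounded by $M/\delta$, obtaining
\[|L_{\bar{x}}(x)-L_{\bar{x}}(z)|\leqslant(M/\delta)^{d(|G|-1)-1}\sum_{(i,y)}|u_{i,y}(x)-u_{i,y}(z)|.\]
Each factorwise difference equals $|x_i-z_i|/|\bar{x}_i-y_i|\leqslant\infnorm{x-z}/\delta$, so summing over the $d(|G|-1)$ indices $(i,y)$ and then over $\bar{x}\in G$ produces the claimed Lipschitz constant $A$.

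For GSpace membership, given any $g\in\GSpace{J}{s_g}$ with $g(J)\subseteq[-K,K]^d$, I would augment the polynomial ODE defining $g$ by a single fresh component $z$, initialised to $L_f(g(t_0))$ and governed by
\[\dot{z}=\sum_{i=1}^d\left(\frac{\partial L_f}{\partial x_i}\right)(g)\cdot\dot{g}_i.\]
Since each $g_i$ is a coordinate of the GPAC state $y$ and each $\dot{g}_i$ is already a polynomial in $y$, the right-hand side is itself a polynomial in $y$, so the enlarged system $(y,z)$ remains a polynomial IVP. The sup-norm bound established above then gives $|z(t)|=|L_f(g(t))|\leqslant B$, hence $\infnorm{(y,z)}\leqslant\max(s_g,B)$, matching the requirement of \defref{def:gspace_ext_class} with the constant space function $s_f\equiv B$.

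The main obstacle lies in this last step: the tempting approach of building $L_f$ inductively from projections (via \lemref{lembug}) using the sum and product closure provided by \lemref{lem:gspace_ext_class_stable} compounds space bounds multiplicatively, producing an estimate of order $M^{d(|G|-1)}$ rather than $B$, because intermediate monomials such as $\prod_j(x_{i_j}-y_{i_j})$ can reach that magnitude even when the final sum $L_f$ cancels to something much smaller. The key trick is therefore to bypass any inductive polynomial construction and instead track $L_f(g(t))$ in a single additional ODE variable, whose bound is controlled directly by the explicit sup-norm estimate on $L_f$ rather than by the compositional closure rules.
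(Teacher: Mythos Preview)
Your argument for the sup-norm bound and for the Lipschitz estimate is exactly the paper's: bound each factor $u_{i,y}$ by $M/\delta$, invoke \lemref{lem:diff_prod} on the length-$d(|G|-1)$ product, and sum over $\bar{x}\in G$. (Both you and the paper actually arrive at $A\,\infnorm{x-z}/\delta$ rather than $A\,\infnorm{x-z}$; the missing $1/\delta$ is a slip in the stated constant $A$, not in your reasoning.)

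The only place you diverge is the $\GSpace{}{}$ claim. The paper simply omits this part of the proof, tacitly treating ``$L_f$ is a polynomial with $|L_f|\leqslant B$ on $[-K,K]^d$'' as sufficient. Your single-extra-variable construction---adjoin $z$ with $\dot z=\sum_i(\partial_i L_f)(g)\,\dot g_i$ and $z(t_0)=L_f(g(t_0))$---is precisely the way to make that tacit step rigorous under \defref{def:gspace_ext_class}, and it is correct.

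One minor comment: your ``main obstacle'' paragraph overstates the difficulty of the inductive route. If you build each factor $u_{i,y}$ first (a linear polynomial, hence in $\GSpace{}{M/\delta}$ by the very same one-variable trick), then the product rule of \lemref{lem:gspace_ext_class_stable} gives intermediate bounds $(M/\delta)^k$, not $M^k$, and after multiplying by the scalar $f(\bar x)$ and summing $|G|$ terms you recover exactly $B$ (at least when $M/\delta\geqslant 1$ and $F\geqslant 1$, which is the regime of interest). So the inductive construction would also work; your direct approach is just cleaner and avoids those case distinctions.
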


\subsection{Turing Machines --- assumptions}\label{sec:tm_simul_first}

Let $\mathcal{M}=(Q,\Sigma,b,\delta,q_0,F)$ be a Turing Machine which will be fixed for the whole simulation. Without loss of generality we assume that:
\begin{itemize}
\item When the machine reaches a final state, it stays in this state
\item $Q=\intinterv{0}{m-1}$ are the states of the machines; $q_0\in
  Q$ is the initial state; $F\subseteq Q$ are the accepting states
\item $\Sigma=\intinterv{0}{k-2}$ is the alphabet and $b=0$ is the
  blank symbol. 
\item $\delta:Q\times\Sigma\rightarrow Q\times\Sigma\times\{L,R\}$ is the transition function, and we identify $\{L,R\}$ with $\{0,1\}$ ($L=0$ and $R=1$). The components of $\delta$ are denoted by $\delta_1,\delta_2,\delta_3$. That is $\delta(q,\sigma)=(\delta_1(q,\sigma),\delta_2(q,\sigma),\delta_3(q,\sigma))$ where $\delta_1$ is the new state, $\delta_2$ the new symbol and $\delta_3$ the head move direction.
\end{itemize}
Notice that the alphabet of the Turing machine has $k-1$ symbols. This will be important for lemma \ref{lem:config_range}.
Consider a configuration $c=(x,\sigma,y,q)$ of the machine as described in \figref{fig:tm_config}. We can encode it as a triple of integers as done in \cite{GCB08} (e.g. if $x_0,x_1,\ldots$ are the digits of $x$ in base $k$, encode $x$ as the number $x_0+x_1k+x_2k^2+\cdots+x_nk^n$), but this encoding is not suitable for our needs. We define the \emph{rational encoding} $[c]$ of $c$ as follows.

\begin{definition}\label{def:conf_rat_enc}
Let $c=(x,s,y,q)$ be a configuration of $\mathcal{M}$, we define the \emph{rational encoding} $[c]$ of $c$ as $[c]=(0.x,s,0.y,q)$ where:
\[0.x=x_0k^{-1}+x_1k^{-2}+\cdots+x_nk^{-n-1}\in\Q\qquad\text{if}\qquad x=x_0+x_1k+\cdots+x_nk^n\in\N\]
\end{definition}

The following lemma explains the consequences on the rational encoding of configurations of the assumptions we made for $\mathcal{M}$.

\begin{lemma}\label{lem:config_range}
Let $c$ be a reachable configuration of $\mathcal{M}$ and $[c]=(0.x,\sigma,0.y,q)$, then $0.x\in[0,\frac{k-1}{k}]$ and similarly for $0.y$.
\end{lemma}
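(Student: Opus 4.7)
The plan is to translate the structural constraints already imposed on $\mathcal{M}$ into a uniform bound on the digits of $x$ and $y$, and then simply sum a geometric series. By our assumption $\Sigma=\llbracket 0,k-2\rrbracket$ (with blank $b=0$) and the fact that the transition function $\delta$ only writes symbols of $\Sigma$, every cell of any reachable tape contains a value in $\{0,1,\ldots,k-2\}$. In particular, writing the finite word to the left of the head as $x=x_0+x_1k+\cdots+x_nk^n$ in base $k$, each digit $x_i$ satisfies $0\leqslant x_i\leqslant k-2$ (and the same for the digits of $y$).

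From the definition of the rational encoding,
\[0.x=\sum_{i=0}^n x_i\,k^{-(i+1)}.\]
The lower bound $0.x\geqslant 0$ is immediate from $x_i\geqslant 0$. For the upper bound, applying the digitwise estimate $x_i\leqslant k-2$ and extending to an infinite geometric series,
\[0.x\;\leqslant\;(k-2)\sum_{i=0}^\infty k^{-(i+1)}\;=\;\frac{k-2}{k-1}.\]
It then suffices to compare $(k-2)/(k-1)$ with $(k-1)/k$: the inequality $(k-2)/(k-1)\leqslant(k-1)/k$ is equivalent to $k(k-2)\leqslant (k-1)^2$, i.e.\ $0\leqslant 1$, which is trivially true (and in fact shows a strict gap of $1/(k(k-1))$ between the two quantities). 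Hence $0.x\in[0,(k-1)/k]$, and the identical argument applied to $y$ gives $0.y\in[0,(k-1)/k]$.

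There is no real obstacle here; the only point worth flagging is why reachability matters. It is precisely reachability that guarantees that no cell ever stores a value outside $\Sigma$: in other words, the ``digit'' $k-1$ is never used by $\mathcal{M}$. The role of this lemma in the sequel is to certify that the value $(k-1)/k$ (equivalently, the digit $k-1$ in the base-$k$ expansion) remains a \emph{reserved} slot, freely available for the GPAC simulation to encode auxiliary information or to play the role of an end-marker during the polynomial-interpolation step of the transition function.
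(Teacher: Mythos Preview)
Your proof is correct and follows the same idea as the paper: bound each base-$k$ digit by $k-2$ and sum the geometric series; you even make explicit the intermediate bound $(k-2)/(k-1)\leqslant(k-1)/k$, which the paper leaves implicit. Your closing remark about the lemma's purpose is slightly off --- the gap is used not as an end-marker but to shift $kx$ by $\tfrac{1}{2k}$ so that valid encodings stay bounded away from integers, which is what makes the analytic approximation $\sigma_k$ of $\intp_k$ accurate in the proof of \lemref{lem:step_bar}.
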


\begin{figure}
\begin{center}
\ifthenelse{\importfigures}{%
\includegraphics{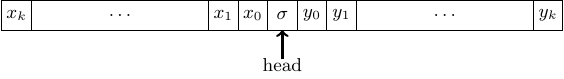}}{%
\begin{tikzpicture}
\draw (0,0) rectangle (0.5,0.5);
\draw (0.25,0.25) node {$\sigma$};

\draw (0.5,0) rectangle (1,0.5);
\draw (0.75,0.25) node {$y_0$};
\draw (1,0) rectangle (1.5,0.5);
\draw (1.25,0.25) node {$y_1$};
\draw (1.5,0) rectangle (4.5,0.5);
\draw (3,0.25) node {$\ldots$};
\draw (4.5,0) rectangle (5,0.5);
\draw (4.75,0.25) node {$y_k$};

\draw (0,0) rectangle (-0.5,0.5);
\draw (-0.25,0.25) node {$x_0$};
\draw (-0.5,0) rectangle (-1,0.5);
\draw (-0.75,0.25) node {$x_1$};
\draw (-1,0) rectangle (-4,0.5);
\draw (-2.5,0.25) node {$\ldots$};
\draw (-4,0) rectangle (-4.5,0.5);
\draw (-4.25,0.25) node {$x_k$};

\draw[->,very thick] (0.25,-0.5) -- (0.25,0);
\draw (0.25,-0.6) node {head};
\end{tikzpicture}}
\end{center}
\caption{Turing Machine configuration}
\label{fig:tm_config}
\end{figure}

\subsection{Simulation of Turing machines --- step 1: Capturing the transition function}

The first step towards a simulation of a Turing Machine $\mathcal{M}$ using a GPAC is to simulate the transition function of $\mathcal{M}$ with a GPAC-computable function $\operatorname{step}_\mathcal{M}$. The next step is to iterate the function $\operatorname{step}_\mathcal{M}$ with a GPAC. Instead of considering configurations $c$ of the machine, we will consider its rational configurations $[c]$ and use the helper functions defined previously. Theoretically, because $[c]$ is rational, we just need that the simulation works over rationals. But, in practice, because errors are allowed on inputs, the function $\operatorname{step}_\mathcal{M}$ has to simulate the transition function of $\mathcal{M}$ in a manner which tolerates small errors on the input. We recall that $\delta$ is the transition function of the $\mathcal{M}$ and we write $\delta_i$ the $i^{th}$ component of $\delta$.

\begin{definition}
We define:
\[\operatorname{step}_\mathcal{M}:\left\{\begin{array}{ccc}
\R^4&\longrightarrow&\R^4\\
\begin{pmatrix}x\\s\\y\\q\end{pmatrix}&\longmapsto&\begin{pmatrix}
    \operatorname{choose}\left[\fracp(kx),\frac{x+L_{\delta_2}(q,s)}{k}\right]\\
     \operatorname{choose}\left[\intp(kx),\intp(ky)\right]\\
     \operatorname{choose}\left[\frac{y+L_{\delta_2}(q,s)}{k},\fracp(ky)\right]\\
    L_{\delta_1}(q,s)
\end{pmatrix}
\end{array}\right.\]
where
$\operatorname{choose}[a,b]=(1-L_{\delta_3}(q,s))a+L_{\delta_3}(q,s)b$
 and $L_{\delta_i}$ is given by \defref{def:lagrange_poly}.
\end{definition}

The function $\operatorname{step}_\mathcal{M}$ simulates the transition function of the Turing Machine $\mathcal{M}$, as shown in the following result.

\begin{lemma}
Let $c_0,c_1,\ldots$ be the sequence of configurations of $\mathcal{M}$ starting from $c_0$. Then
\[\forall n\in\N, [c_n]=\fiter{\operatorname{step}_\mathcal{M}}{n}([c_0])\]
\end{lemma}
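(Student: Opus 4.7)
The plan is to argue by induction on $n$. The base case $n=0$ holds since $\fiter{\operatorname{step}_\mathcal{M}}{0}=\operatorname{id}$ and $[c_0]=[c_0]$. The inductive step reduces to the one-step claim $[c_{n+1}]=\operatorname{step}_\mathcal{M}([c_n])$, which is what the bulk of the proof establishes.

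For the one-step claim I would write $[c_n]=(0.x,s,0.y,q)$ where $c_n=(x,s,y,q)$, and let $\delta(q,s)=(q',s',d)$. Before doing any case work I would record two auxiliary facts. First, because $(q,s)\in Q\times\Sigma$ is a grid point of the Lagrange interpolation, $L_{\delta_i}(q,s)=\delta_i(q,s)$ for $i=1,2,3$; so the fourth output component $L_{\delta_1}(q,s)$ is exactly the new state $q'$, and the weight $L_{\delta_3}(q,s)$ in $\operatorname{choose}$ is $0$ or $1$ according to whether $d=L$ or $d=R$. Second, the rational encoding satisfies two bookkeeping identities: $k\cdot 0.x = x_0+0.(x_1x_2\ldots)$, so $\intp(k\cdot 0.x)=x_0$ and $\fracp(k\cdot 0.x)=0.(x_1x_2\ldots)$; and prepending a digit $\tau\in\{0,\ldots,k-1\}$ to a tape sequence $z$ yields encoding $(\tau+0.z)/k$. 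Both identities rest on $0.x,0.y\in[0,(k-1)/k]$, which is exactly \lemref{lem:config_range}, so the classical $\intp$ and $\fracp$ used in $\operatorname{step}_\mathcal{M}$ act unambiguously on these rational inputs.

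The final step is a case split on $d$. If $d=L$ then $L_{\delta_3}(q,s)=0$ and $\operatorname{choose}[a,b]=a$. Applying the Turing-machine transition rule, the new configuration has left tape $(x_1,x_2,\ldots)$, current symbol $x_0$, right tape $(s',y_0,y_1,\ldots)$, and state $q'$; by the auxiliary identities these encode as $\fracp(k\cdot 0.x)$, $\intp(k\cdot 0.x)$, $(s'+0.y)/k=(L_{\delta_2}(q,s)+0.y)/k$, and $L_{\delta_1}(q,s)$, which is exactly what the four components of $\operatorname{step}_\mathcal{M}$ produce with the $a$-branch selected. If $d=R$ then $L_{\delta_3}(q,s)=1$ and $\operatorname{choose}[a,b]=b$; the analogous transition yields new left $(s',x_0,x_1,\ldots)$ with encoding $(s'+0.x)/k$, new symbol $y_0=\intp(k\cdot 0.y)$, new right $\fracp(k\cdot 0.y)$, and new state $q'$, again matching the $b$-branch of each component.

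I do not expect a real obstacle here: once \lemref{lem:config_range} has been invoked to legitimize the exact $\intp/\fracp$ bookkeeping and once one has noticed that Lagrange interpolation reproduces $\delta$ on its grid, the two cases are mechanical verifications of four scalar equalities each. The only place that requires care is lining up which branch of $\operatorname{choose}$ is selected in the $L$ versus $R$ case and tracking the asymmetric roles of $x$ (left tape, processed when moving left) and $y$ (right tape, processed when moving right) across the four output components.
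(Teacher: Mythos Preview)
The paper does not actually supply a proof of this lemma; it is stated and then the text moves on, treating it as a direct definitional check. Your argument---induction on $n$ reducing to the one-step identity, the observation that $L_{\delta_i}$ agrees with $\delta_i$ on the interpolation grid $Q\times\Sigma$, the two digit-shift identities $\intp(k\cdot 0.x)=x_0$, $\fracp(k\cdot 0.x)=0.(x_1x_2\ldots)$, and the $L/R$ case split---is exactly the routine verification the paper leaves implicit, and it is correct.

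One remark: in the $d=R$ branch you write the new left encoding as $(s'+0.x)/k$, which is what one expects and what the robust version $\overline{\operatorname{step}}_\mathcal{M}$ uses, but the paper's displayed definition of $\operatorname{step}_\mathcal{M}$ literally has $\frac{y+L_{\delta_2}(q,s)}{k}$ in that slot (with $y$ rather than $x$). This is evidently a typo in the paper---compare the first component of $\overline{\operatorname{step}}_\mathcal{M}$, which has $x$---and your silent correction is the right call; you may want to flag it explicitly so a reader cross-checking against the definition is not confused.
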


Now we want to extend the function $\operatorname{step}_\mathcal{M}$ to work not only on rationals encodings of configurations but also on reals close to configurations, in a way which tolerates small errors on the input. That is we want to build a robust approximation of $\operatorname{step}_\mathcal{M}$. We already have some results on $L$ thanks to \lemref{lem:interp_L}. We also have some results on $\intp(\cdot)$ and $\fracp(\cdot)$. However, we need to pay attention to the case of nearly empty tapes. This can be done by a shifting $x$ by a small amount ($1/(2k)$) before computing the interger/fractional part. Then lemma \ref{lem:config_range} and lemma \ref{lem:xi} ensure that the result is correct.

\begin{definition}
Define:
\[\overline{\operatorname{step}}_\mathcal{M}(\tau,\lambda):\left\{\begin{array}{ccc}
\R^4&\longrightarrow&\R^4\\
\begin{pmatrix}x\\s\\y\\q\end{pmatrix}&\longmapsto&\begin{pmatrix}
    \operatorname{choose}\left[\overline{\fracp}(kx),\frac{x+L_{\delta_2}(q,s)}{k},q,s\right]\\
     \operatorname{choose}\left[\overline{\intp}(kx),\overline{\intp}(ky),q,s\right]\\
     \operatorname{choose}\left[\frac{y+L_{\delta_2}(q,s)}{k},\overline{\fracp}(ky),q,s\right]\\
    L_{\delta_1}(q,s)
\end{pmatrix}
\end{array}\right.\]
where
\[\operatorname{choose}[a,b,q,s]=(1-L_{\delta_3}(q,s))a+L_{\delta_3}(q,s)b\]
\[\overline{\intp}(x)=\sigma_{k}\left(x+\frac{1}{2k},\tau,\lambda\right)\]
\[\overline{\fracp}(x)=x-\overline{\intp}(x)\]
\end{definition}

We now show that $\overline{\operatorname{step}}_\mathcal{M}$ is a robust version of $\operatorname{step}_\mathcal{M}$. We first begin with a lemma about function $choose$.

\begin{lemma}\label{lem:choose}
There exists $A_3>0$ and $B_3>0$ such that $\forall q,\bar{q},s,\bar{s},a,b,\bar{a},\bar{b}\in\R$, if
\[\infnorm{(\bar{a},\bar{b})}\leqslant M\qquad\text{and}\qquad q\in Q,s\in\Sigma\qquad\text{and}\qquad\infnorm{(q,s)-(\bar{q},\bar{s})}\leqslant 1\]
then
\[\left|\operatorname{choose}[a,b,q,s]-\operatorname{choose}[\bar{a},\bar{b},\bar{q},\bar{s}]\right|\leqslant\infnorm{(a,b)-(\bar{a},\bar{b})}+2MA_3\infnorm{(q,s)-(\bar{q},\bar{s})}\]
Furthermore, $\operatorname{choose}\in\GSpace{\R^2\times[-m,m]\times[-k,k]}{\lambdafun{a,b,q,s}{(1+B_3)(a+b)}}$.
\end{lemma}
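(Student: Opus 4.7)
My plan is to first prove the quantitative Lipschitz-type bound via a single algebraic splitting, and then to obtain the GSpace membership from the basic arithmetic closure rules already established.

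For the inequality, write $\ell = L_{\delta_3}(q,s)$ and $\bar\ell = L_{\delta_3}(\bar q,\bar s)$. Direct expansion gives the identity
\[
\operatorname{choose}[a,b,q,s] - \operatorname{choose}[\bar a,\bar b,\bar q,\bar s]
= (1-\ell)(a-\bar a) + \ell(b-\bar b) + (\ell-\bar\ell)(\bar b-\bar a).
\]
The key observation is that $(q,s)\in Q\times\Sigma$ is a node of the interpolation grid $G$, so $\ell = \delta_3(q,s) \in\{0,1\}$. Hence $1-\ell$ and $\ell$ are non-negative weights summing to $1$, and
\[
\left|(1-\ell)(a-\bar a) + \ell(b-\bar b)\right|
\leqslant \max(|a-\bar a|,|b-\bar b|)
= \infnorm{(a,b)-(\bar a,\bar b)}.
\]
For the third summand, $\infnorm{(\bar a,\bar b)}\leqslant M$ yields $|\bar b - \bar a|\leqslant 2M$. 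Applying \lemref{lem:interp_L} to $L_{\delta_3}$ on $G=Q\times\Sigma$ with $K:=\max(m,k)$ gives $|\ell-\bar\ell|\leqslant A\infnorm{(q,s)-(\bar q,\bar s)}$; the hypothesis $\infnorm{(q,s)-(\bar q,\bar s)}\leqslant 1$ together with $(q,s)\in G$ ensures $(\bar q,\bar s)\in[-K,K]^2$, so the lemma applies. Setting $A_3:=A$ and summing the three estimates gives the claimed Lipschitz bound.

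For the GSpace claim, \lemref{lem:interp_L} also provides $L_{\delta_3}\in\GSpace{[-K,K]^2}{B}$ for the associated constant $B$; composing with the polynomial projection $(a,b,q,s)\mapsto(q,s)$ via \lemref{lem:gspace_ext_comp_poly} (together with \lemref{lembug}) gives $L_{\delta_3}\in\GSpace{\R^2\times[-m,m]\times[-k,k]}{B}$. The projections onto $a$ and $b$ lie in $\GSpace{\cdot}{0}$ by \lemref{lembug}. Assembling $(1-L_{\delta_3}(q,s))a + L_{\delta_3}(q,s)b$ using the sum, difference, and product rules of \lemref{lem:gspace_ext_class_stable} produces a space bound of the form $\max\bigl((1+B)|a|,\, B|b|,\, 1+B\bigr)\leqslant(1+B_3)(|a|+|b|)$ for an appropriate constant $B_3\geqslant B$ absorbing the lower-order terms, which is the advertised bound.

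The heart of the argument is the algebraic identity together with the crucial fact that $L_{\delta_3}$ takes values in $\{0,1\}$ on the grid $G$: this is exactly what turns the ``input-Lipschitz'' portion into a genuine convex combination of $|a-\bar a|$ and $|b-\bar b|$ with total weight $1$, instead of a combination whose coefficient would blow up with the Lagrange Lipschitz constant $A$ (which contains the large factor $(M/\delta)^{d(|G|-1)-1}$). Once this observation is in place, both the Lipschitz estimate and the GSpace bookkeeping are straightforward consequences of previously established closure properties.
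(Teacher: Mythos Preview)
Your proof is correct and follows exactly the paper's approach: the same algebraic three-term splitting, the same key observation that $L_{\delta_3}(q,s)\in\{0,1\}$ on grid points, and the same use of \lemref{lem:interp_L} for both the Lipschitz constant $A_3$ and the space bound $B_3$. Your write-up is in fact more complete than the paper's, which omits the GSpace bookkeeping you carry out explicitly.
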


\begin{lemma}\label{lem:step_bar}
There exists $A_1,A_2,A_3,B_1,B_2,B_3>0$ such that for any $\tau,\lambda>0$, any valid rational configuration $c=(x,s,y,q)\in\R^4$ and any $\bar{c}=(\bar{x},\bar{s},\bar{y},\bar{q})\in\R^4$, if
\[\infnorm{(x,y)-(\bar{x},\bar{y})}\leqslant \frac{1}{2k^2}-\frac{1}{k\lambda}\qquad\text{and}\qquad\infnorm{(q,s)-(\bar{q},\bar{s})}\leqslant 1\]
then
\[
\begin{array}{r@{\hspace{0.3em}}l}
\text{for }p\in\{1,3\}\quad|\operatorname{step}_\mathcal{M}(c)_p-\overline{\operatorname{step}}_\mathcal{M}(\tau,\lambda)(\bar{c})_p|&\leqslant k\infnorm{(x,y)-(\bar{x},\bar{y})}+(1+2A_3)\left(e^{-\tau}+\frac{A_2}{k}\infnorm{(q,s)-(\bar{q},\bar{s})}\right)\\
|\operatorname{step}_\mathcal{M}(c)_2-\overline{\operatorname{step}}_\mathcal{M}(\tau,\lambda)(\bar{c})_2|&\leqslant2A_3k\infnorm{(q,s)-(\bar{q},\bar{s})}+e^{-\tau}\\
|\operatorname{step}_\mathcal{M}(c)_4-\overline{\operatorname{step}}_\mathcal{M}(\tau,\lambda)(\bar{c})_4|&\leqslant A_1\infnorm{(q,s)-(\bar{q},\bar{s})}
\end{array}\]
Furthermore,
\[\overline{\operatorname{step}}_\mathcal{M}\in\GSpace{(\R_+^*)^2\times[-1,1]\times[-m,m]\times[-1,1]\times[-k,k]}{B_1+(1+B_3)(2k+1+B_2k^{-1})}\]
\end{lemma}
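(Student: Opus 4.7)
The plan is to prove the four coordinate-wise error bounds separately, and then to derive the GSPACE bound by composing the space estimates of the building blocks. I would first fix a valid rational configuration $c=(x,s,y,q)$ and a perturbed point $\bar{c}=(\bar{x},\bar{s},\bar{y},\bar{q})$ satisfying the hypotheses, and for each coordinate trace the error through the expressions defining $\overline{\operatorname{step}}_\mathcal{M}(\tau,\lambda)(\bar{c})$ using \lemref{lem:choose} to handle the $\operatorname{choose}$ wrapper and \lemref{lem:interp_L} to handle the Lagrange polynomials $L_{\delta_i}$. Coordinate 4 is immediate: it gives the constant $A_1$ and $B_1$ coming from \lemref{lem:interp_L} applied to $L_{\delta_1}$.

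The technical heart of the proof is showing that $\overline{\intp}(k\bar{x})$ and $\overline{\fracp}(k\bar{x})$ approximate $\intp(kx)$ and $\fracp(kx)$ to within $e^{-\tau}$ plus a Lipschitz term. Here \lemref{lem:config_range} is used crucially: since $c$ is a valid rational configuration, $kx$ has fractional part in $[0,(k-1)/k]$, so $kx+\tfrac{1}{2k}$ has fractional part in $[\tfrac{1}{2k},\tfrac{2k-1}{2k}]$ and hence lies at distance at least $\tfrac{1}{2k}$ from $\mathbb{Z}$. The perturbation bound $|x-\bar{x}|\leqslant\tfrac{1}{2k^2}-\tfrac{1}{k\lambda}$ gives $|kx-k\bar{x}|\leqslant\tfrac{1}{2k}-\tfrac{1}{\lambda}$, so $k\bar{x}+\tfrac{1}{2k}$ remains at distance at least $\lambda^{-1}$ from $\mathbb{Z}$ and retains the integer part $\intp(kx)\in\{0,\dots,k-1\}$. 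The hypothesis of \lemref{lem:sigmap} applies and yields $|\overline{\intp}(k\bar{x})-\intp(kx)|<e^{-\tau}$, whence $|\overline{\fracp}(k\bar{x})-\fracp(kx)|\leqslant k|x-\bar{x}|+e^{-\tau}$, and likewise in the $y$-coordinate.

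The error bounds on coordinates 1, 2, and 3 then follow by feeding these estimates into \lemref{lem:choose}. For coordinate 2 both arguments of $\operatorname{choose}$ are $\overline{\intp}$ values bounded near $\{0,\dots,k-1\}$, so $M$ can be taken as $k$ and only the $e^{-\tau}$ and the $2A_3k\infnorm{(q,s)-(\bar{q},\bar{s})}$ terms survive. For coordinates 1 and 3 the $\overline{\fracp}$ contribution produces the $k\infnorm{(x,y)-(\bar{x},\bar{y})}+e^{-\tau}$ term, while the affine argument $(y+L_{\delta_2}(q,s))/k$ is controlled by \lemref{lem:interp_L} applied to $L_{\delta_2}$, producing the $A_2/k$ factor; combining these through \lemref{lem:choose} and factoring out the $(1+2A_3)$ gives exactly the stated bound, with $A_2,B_2$ coming from $L_{\delta_2}$ and $A_3,B_3$ from $\operatorname{choose}$.

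For the GSPACE statement, I would propagate space through the composition using \lemref{lem:gspace_ext_class_stable} and \lemref{lem:gspace_ext_comp_poly}: $\sigma_k$ contributes $s=k$ (hence $\overline{\intp}(kx)$ does too, since pre-composition with polynomials preserves the bound), $\overline{\fracp}(kx)=kx-\overline{\intp}(kx)$ contributes roughly $k+1$ after the $[-1,1]$ restriction on $x$, $L_{\delta_i}$ contributes $B_i$, and $\operatorname{choose}$ multiplies by $(1+B_3)$ and sums the two arguments; the maximum over the four coordinates gives $B_1+(1+B_3)(2k+1+B_2k^{-1})$. The main obstacle will be the careful bookkeeping of the constants and verifying that the shift by $1/(2k)$ combined with the tight perturbation radius $1/(2k^2)-1/(k\lambda)$ exactly suffices to keep $k\bar{x}+1/(2k)$ in the ``good'' region of \lemref{lem:sigmap}; everything else is a mechanical, though tedious, chaining of the inequalities already established.
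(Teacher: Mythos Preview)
Your proposal is correct and follows essentially the same route as the paper's proof: the same use of \lemref{lem:config_range} to show that the shift by $\tfrac{1}{2k}$ places $k\bar{x}+\tfrac{1}{2k}$ in the good region of \lemref{lem:sigmap}, the same bound $|\overline{\intp}(k\bar{x})-\intp(kx)|\leqslant e^{-\tau}$, the same Lipschitz estimates from \lemref{lem:interp_L} for $L_{\delta_1},L_{\delta_2}$, and the same four applications of \lemref{lem:choose}. The paper additionally records the explicit magnitude bounds on $\overline{\intp}(k\bar{x})$, $\overline{\fracp}(k\bar{x})$, and $\tfrac{\bar{x}+L_{\delta_2}(\bar{q},\bar{s})}{k}$ needed to supply the $M$ in \lemref{lem:choose}, and then writes out the final two \lemref{lem:choose} computations; your sketch subsumes all of this.
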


We summarize the previous lemma into the following simpler form.

\begin{corollary}\label{cor:tm_simul_iter}
For any $\tau,\lambda>0$, any valid rational configuration $c=(x,s,y,q)\in\R^4$ and any $\bar{c}=(\bar{x},\bar{s},\bar{y},\bar{q})\in\R^4$, if
\[\infnorm{(x,y)-(\bar{x},\bar{y})}\leqslant \frac{1}{2k^2}-\frac{1}{k\lambda}\qquad\text{and}\qquad\infnorm{(q,s)-(\bar{q},\bar{s})}\leqslant 1\]
then
\[\infnorm{\operatorname{step}_\mathcal{M}(c)-\overline{\operatorname{step}}_\mathcal{M}(\tau,\lambda)(\bar{c})}\leqslant O(1)(e^{-\tau}+\infnorm{c-\bar{c}})\]
Furthermore,
\[\overline{\operatorname{step}}_\mathcal{M}\in\GSpace{(\R_+^*)^2\times[-1,1]\times[-m,m]\times[-1,1]\times[-k,k]}{O(1)}\]
\end{corollary}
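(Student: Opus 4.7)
The corollary is essentially a repackaging of \lemref{lem:step_bar}, so my plan is to deduce it directly from that lemma by absorbing all machine-dependent constants into the $O(1)$ notation.

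First, I would recall that the Turing machine $\mathcal{M}$ is fixed, so the integers $k$ (alphabet size) and $m$ (number of states) are constants, and the quantities $A_1, A_2, A_3, B_1, B_2, B_3$ produced by \lemref{lem:step_bar} depend only on $\mathcal{M}$ (through the interpolation constants of \lemref{lem:interp_L} applied to $\delta_1, \delta_2, \delta_3$ and through the constants of \lemref{lem:choose}). Consequently every coefficient appearing in the three inequalities of \lemref{lem:step_bar} can be bounded by a single constant $C>0$ depending only on $\mathcal{M}$.

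Next I would combine the three component bounds. Using $\infnorm{(x,y)-(\bar x,\bar y)} \leqslant \infnorm{c-\bar c}$ and $\infnorm{(q,s)-(\bar q,\bar s)} \leqslant \infnorm{c-\bar c}$, each of the three right-hand sides of \lemref{lem:step_bar} is bounded above by $C(e^{-\tau}+\infnorm{c-\bar c})$ for a suitable $C$: the case $p \in \{1,3\}$ contributes the term $k\infnorm{c-\bar c}+(1+2A_3)(e^{-\tau}+\tfrac{A_2}{k}\infnorm{c-\bar c})$; the case $p=2$ contributes $2A_3 k\infnorm{c-\bar c}+e^{-\tau}$; the case $p=4$ contributes $A_1\infnorm{c-\bar c}$. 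Taking the maximum over $p\in\{1,2,3,4\}$ gives precisely
\[\infnorm{\operatorname{step}_{\mathcal{M}}(c)-\overline{\operatorname{step}}_{\mathcal{M}}(\tau,\lambda)(\bar c)} \leqslant O(1)\bigl(e^{-\tau}+\infnorm{c-\bar c}\bigr).\]

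For the $\GSpace{\cdot}{\cdot}$ statement, I would simply note that $\lemref{lem:step_bar}$ gives the space bound $B_1+(1+B_3)(2k+1+B_2 k^{-1})$ over the same domain, and since $k,B_1,B_2,B_3$ are all constants depending only on $\mathcal{M}$, this bound is itself $O(1)$. I do not anticipate a real obstacle here: the entire content of the corollary is notational simplification, so the only care needed is to check that the hypotheses of \lemref{lem:step_bar} (namely $\infnorm{(x,y)-(\bar x,\bar y)} \leqslant \tfrac{1}{2k^2}-\tfrac{1}{k\lambda}$ and $\infnorm{(q,s)-(\bar q,\bar s)}\leqslant 1$) are transcribed unchanged into the corollary, which they are.
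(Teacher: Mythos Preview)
Your proposal is correct and matches the paper's own treatment: the corollary is stated there with the sentence ``We summarize the previous lemma into the following simpler form'' and no separate proof, so absorbing the machine-dependent constants $k,m,A_1,A_2,A_3,B_1,B_2,B_3$ from \lemref{lem:step_bar} into a single $O(1)$ factor is exactly the intended (and only) step.
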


\subsection{Simulation of Turing machines --- step 2: Iterating functions with differential equations}\label{sec:it_fn_diff_eq}

We will use a special kind of differential equations to perform the iteration of a map with differential equations. In essence, it relies on the following core differential equation
\begin{equation}\label{eq:branicky_simple}\tag{Reach}
\dot{x}(t)=A\phi(t)(g-x(t))
\end{equation}
We will see that with proper assumptions, the solution converges very quickly to the \emph{goal} g. However, \eqref{eq:branicky_simple} is a simplistic idealization of the system so we need to consider a perturbed equation where the goal is not a constant anymore and the derivative is subject to small errors
\begin{equation}\label{eq:branicky_perturbed}\tag{ReachPerturbed}
\dot{x}(t)=A\phi(t)(\bar{g}(t)-x(t))+E(t)
\end{equation}
We will again see that, with proper assumptions, the solution converges quickly to the \emph{goal} within a small error. Finally we will see how to build a differential equation which iterates a map within a small error.

We first focus on \eqref{eq:branicky_simple} and then \eqref{eq:branicky_perturbed} to show that they behave as expected. In this section we assume $\phi$ is a positive $C^1$ function.

\begin{lemma}\label{lem:branicky_simple}
Let $x$ be a solution of \eqref{eq:branicky_simple}, let $T,\lambda>0$ and assume $A\geqslant\frac{\lambda}{\int_0^T\phi(u)du}$ then $|x(T)-g|\leqslant|g-x(0)|e^{-\lambda}$.
\end{lemma}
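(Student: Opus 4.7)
The plan is to reduce this to a one-dimensional linear ODE and solve it explicitly. Let $y(t) = x(t) - g$, which measures the distance from the current state to the goal. Since $g$ is a constant, differentiating gives
\[\dot y(t) = \dot x(t) = A\phi(t)(g - x(t)) = -A\phi(t) y(t),\]
so $y$ satisfies a scalar linear homogeneous ODE with variable coefficient $-A\phi(t)$.

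Next I would write down the closed-form solution. Because the coefficient depends only on $t$ (not on $y$), the equation separates, and the unique solution is
\[y(t) = y(0)\exp\!\left(-A\int_0^t \phi(u)\,du\right).\]
Translating back, $x(t) - g = (x(0) - g)\exp\!\left(-A\int_0^t \phi(u)\,du\right)$.

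Finally, evaluating at $t = T$ and taking absolute values,
\[|x(T) - g| = |x(0) - g|\,\exp\!\left(-A\int_0^T \phi(u)\,du\right).\]
The hypothesis $A \geq \lambda/\int_0^T \phi(u)\,du$ (which implicitly requires $\int_0^T \phi(u)\,du > 0$) yields $A\int_0^T \phi(u)\,du \geq \lambda$, hence $\exp(-A\int_0^T \phi(u)\,du) \leq e^{-\lambda}$, giving the claim.

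There is essentially no obstacle here: the whole argument is a textbook integrating-factor (or separation-of-variables) computation, and the hypothesis on $A$ is exactly what is needed to convert the exponential decay rate into the stated bound $e^{-\lambda}$. The only thing worth double-checking is that no sign issues arise, i.e.\ the result requires only that the integral of $\phi$ over $[0,T]$ is large enough; $\phi$ itself need not be pointwise nonnegative for the algebra to go through, although in the intended application (Branicky-style reachability) $\phi$ will be nonnegative anyway, which is what makes the bound tight.
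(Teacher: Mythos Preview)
Your proof is correct and matches the paper's approach exactly: the paper simply writes down the explicit solution $x(t)=g+(x(0)-g)e^{-A\int_0^t\phi(u)\,du}$ and reads off the bound, which is precisely your integrating-factor computation with $y=x-g$.
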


\begin{lemma}\label{lem:branicky_perturbed}
Let $T,\lambda>0$ and let $x$ be the solution of \eqref{eq:branicky_perturbed} with initial condition $x(0)=x_0$. Assume $|\bar{g}(t)-g|\leqslant\eta$, $A\geqslant\frac{\lambda}{\int_0^T\phi(u)du}$ and $E(t)=0$ for $t\in[0,T]$. Then
\[|x(T)-g|\leqslant\eta(1+e^{-\lambda})+|x_0-g|e^{-\lambda}\]
\end{lemma}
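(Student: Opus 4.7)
The plan is to reduce the perturbed equation to a linear ODE in the error $y(t)=x(t)-g$ and then solve it explicitly using an integrating factor. Since $E(t)=0$ on $[0,T]$, the equation \eqref{eq:branicky_perturbed} reduces on this interval to $\dot{x}(t)=A\phi(t)(\bar{g}(t)-x(t))$. Writing $\bar{g}(t)=g+(\bar{g}(t)-g)$, the substitution $y(t)=x(t)-g$ yields
\[\dot{y}(t)=-A\phi(t)\,y(t)+A\phi(t)(\bar{g}(t)-g),\qquad y(0)=x_0-g.\]
This is a first-order linear scalar ODE, and I would solve it by the integrating factor $\exp\!\bigl(A\int_0^t\phi(u)\,du\bigr)$ to obtain
\[y(T)=y(0)\,e^{-A\int_0^T\phi(u)\,du}+\int_0^T e^{-A\int_s^T\phi(u)\,du}\,A\phi(s)\bigl(\bar{g}(s)-g\bigr)\,ds.\]

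From here the bound is routine. The first term is controlled by the hypothesis $A\int_0^T\phi(u)\,du\geqslant\lambda$, giving $|y(0)|e^{-A\int_0^T\phi}\leqslant|x_0-g|e^{-\lambda}$. For the second term, I would use $|\bar{g}(s)-g|\leqslant\eta$ to pull $\eta$ outside, and then recognize that
\[\int_0^T A\phi(s)\,e^{-A\int_s^T\phi(u)\,du}\,ds=\Bigl[e^{-A\int_s^T\phi(u)\,du}\Bigr]_{s=0}^{s=T}=1-e^{-A\int_0^T\phi(u)\,du}\leqslant 1,\]
via the substitution $u(s)=-A\int_s^T\phi$, whose differential is $du=A\phi(s)\,ds$. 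Combining the two estimates gives $|x(T)-g|=|y(T)|\leqslant\eta+|x_0-g|e^{-\lambda}$, which is in fact slightly stronger than (and hence implies) the stated bound $\eta(1+e^{-\lambda})+|x_0-g|e^{-\lambda}$.

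I do not anticipate any real obstacle; the whole argument is a textbook linear-ODE computation. The only point requiring mild care is recognizing the integrated factor as an exact antiderivative so that the second-term integral collapses to $1-e^{-A\int_0^T\phi}$ without further estimation, and ensuring that the assumption $\phi\geqslant 0$ (implicit since $\int_0^T\phi>0$ is needed for the bound on $A$ to make sense, and $\phi$ is the $C^1$ positive bump of the preceding lemma) is used so that the exponential is monotone and the integrating-factor bound is valid.
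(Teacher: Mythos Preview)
Your argument is correct and in fact yields the slightly sharper bound $|x(T)-g|\leqslant\eta+|x_0-g|e^{-\lambda}$, which implies the stated inequality. The only implicit hypothesis you rely on, namely $\phi\geqslant 0$, is also tacitly used in the paper's proof and is satisfied in every application (where $\phi=\theta(\cdot,B)$).

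The paper, however, proceeds differently. Rather than solving the linear ODE explicitly, it argues by comparison: it introduces the two auxiliary solutions $x^{\pm}$ of the \emph{unperturbed} equation \eqref{eq:branicky_simple} with shifted goals $g\pm\eta$ and the same initial value $x_0$, shows via a differential-inequality argument that $x^{-}(t)\leqslant x(t)\leqslant x^{+}(t)$ on $[0,T]$, and then applies \lemref{lem:branicky_simple} to each of $x^{\pm}$. This yields $|x^{\pm}(T)-(g\pm\eta)|\leqslant|x_0-(g\pm\eta)|e^{-\lambda}$, and a triangle inequality gives the factor $\eta(1+e^{-\lambda})$ rather than your $\eta$. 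Your integrating-factor computation is more elementary and loses nothing; the paper's sandwiching argument avoids writing down the Duhamel formula and reuses \lemref{lem:branicky_simple} as a black box, at the cost of a slightly weaker constant. Both approaches need $A\phi(t)\geqslant 0$ at the same point: you need it to pass from $|\bar g(s)-g|\leqslant\eta$ to the integral bound, and the paper needs it to get $f^{-}\leqslant f\leqslant f^{+}$.
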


We can now define a system that simulates the iteration of a function using a system based on \eqref{eq:branicky_perturbed}. It work as described in \cite{GCB08}. There are two variables for simulating each component $f_i$, $i=1,\dots,n$, of the function $f$ to be iterated. There will be periods in which the function is iterated one time. In half of the period, half ($n$) of the variables will stay (nearly) constant and close to values $\alpha_1,\dots,\alpha_n$, while the other remaining $n$ variables update their value to $f_i(\alpha_1,\dots,\alpha_n)$, for $i=1,\dots,n$. In the other half period, the second subset of variables is then kept constant, and now it is the first subset of variables which is updated to $f_i(\alpha_1,\dots,\alpha_n)$, for $i=1,\dots,n$.

\begin{definition}
Let $d\in\N$, $F:\R^d\rightarrow\R^d$, $\lambda\geqslant1,\mu\geqslant0$ and $u_0\in\R^d$, we define
\begin{equation}\label{eq:simul_iter}\tag{Iterate}
\left\{\begin{array}{r@{}l}
z(0)&=u_0\\
u(0)&=u_0\\
\end{array}\right.
\qquad
\left\{\begin{array}{r@{}l}
\dot{z}_i(t)&=A\theta(t,B)(F_i(u(t))-z_i(t))\\
\dot{u}_i(t)&=A\theta(t-1/2,B)(z_i(t)-u_i(t))
\end{array}\right.
\end{equation}
where $A=10(\lambda+\mu)^2$ and $B=4(\lambda+\mu)$.
\end{definition}

\begin{theorem}\label{th:fn_simul_diff_eq}
Let $d\in\N$, $F:\R^d\rightarrow\R^d$, $\lambda\geqslant1$, $\mu\geqslant0$, $u_0,c_0\in\R^d$. Assume $z,u$ are solutions to \eqref{eq:simul_iter} and let $\Delta F$ and $M\geqslant 1$ be such that
\[\forall k\in\N,\forall \varepsilon>0,\forall x\in]-\varepsilon,\varepsilon[^d,\infnorm{\fiter{F}{k+1}(c_0)-F\left(\fiter{F}{k}(c_0)+x\right)}\leqslant\Delta F(\varepsilon)\]
\[\forall t\geqslant 0, \infnorm{u(t)},\infnorm{z(t)},\infnorm{F(u(t))}\leqslant M=e^\mu\]
and consider
\[\left\{\begin{array}{@{}c@{}l}
    \varepsilon_0&=\infnorm{u_0-c_0}\\
    \varepsilon_{k+1}&=(1+3e^{-\lambda})\Delta F(\varepsilon_k+2e^{-\lambda})+5e^{-\lambda}
    \end{array}\right.\]
Then
\[\forall k\in\N,\infnorm{u(k)-\fiter{F}{k}(c_0)}\leqslant\varepsilon_k\]
Furthermore, if $F\in\GSpace{[-M,M]^d}{s_F}$ for $s_F:[-M,M]\rightarrow\R$ then \[\left(\lambdafun{\lambda,\mu,t,u_0}{u(t)}\right)\in\GSpace{\left(\R_+^*\right)^{3}\times\R^d}{\lambdafun{\lambda,\mu,t,u_0}{\max(1,4(\lambda+\mu),s_F(M))}}\]
\end{theorem}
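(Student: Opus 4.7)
The plan is to prove the iterate bound by induction on $k$, exploiting the alternating structure of \eqref{eq:simul_iter}. The key observation, coming from Lemma \ref{lem:theta} and the $1$-periodicity of $\theta$, is that on $[k, k+\tfrac12]$ the factor $\theta(t, B)$ contributes a definite amount to its integral while $\theta(t-\tfrac12, B) \leq e^{-B}$, and the roles reverse on $[k+\tfrac12, k+1]$. Intuitively, in the first half-period $z$ is pulled toward $F(u)$ while $u$ is nearly frozen, and in the second the situation is reversed. The base case $k=0$ holds by definition of $\varepsilon_0$.

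For the inductive step I would first bound the ``off'' drift by direct integration: on $[k, k+\tfrac12]$ one has $|\dot u_i| \leq A e^{-B}\cdot 2M$, which for $A = 10(\lambda+\mu)^2$, $B = 4(\lambda+\mu)$, $M = e^\mu$ and $\lambda \geq 1$ integrates to $\infnorm{u(t) - u(k)} \leq e^{-\lambda}$; the symmetric estimate $\infnorm{z(t) - z(k+\tfrac12)} \leq e^{-\lambda}$ holds on $[k+\tfrac12, k+1]$. Then, shifting time by $k$ thanks to the $1$-periodicity of $\theta$, the equation for $z_i$ on $[k, k+\tfrac12]$ becomes an instance of \eqref{eq:branicky_perturbed} with $\phi(\tau) = \theta(\tau, B)$, constant goal $g_i = \fiter{F}{k+1}_i(c_0)$, perturbed goal $\bar g(\tau) = F_i(u(k+\tau))$, and $E \equiv 0$. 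The induction hypothesis together with the drift bound gives $u(k+\tau) - \fiter{F}{k}(c_0) \in (-\varepsilon_k - e^{-\lambda}, \varepsilon_k + e^{-\lambda})^d$, so the hypothesis on $\Delta F$ yields $|\bar g(\tau) - g_i| \leq \Delta F(\varepsilon_k + 2 e^{-\lambda})$. Lemma \ref{lem:theta} lower bounds $\int_0^{1/2}\theta(\tau, B)\,d\tau$ by $(4e(\lambda+\mu))^{-1/4}/\pi$, and a short computation verifies that $A$ is large enough for Lemma \ref{lem:branicky_perturbed} to apply with effective decay rate $\tilde\lambda = \lambda+\mu$. This yields
\[
|z_i(k+\tfrac12) - g_i| \leq \Delta F(\varepsilon_k + 2 e^{-\lambda})(1 + e^{-\lambda}) + |z_i(k) - g_i|\,e^{-\lambda-\mu},
\]
and the last term is bounded by $2 e^{-\lambda}$ since $|z_i(k)|, |g_i| \leq e^\mu$. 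A symmetric argument on $[k+\tfrac12, k+1]$ transfers this from $z(k+\tfrac12)$ to $u(k+1)$, absorbing an additional $e^{-\lambda}$ for the drift of $z$. Chaining the two bounds and using $(1+e^{-\lambda})^2 \leq 1 + 3 e^{-\lambda}$ for $\lambda \geq 1$ produces the claimed recursion.

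For the $\GSpace$ statement, the ODE \eqref{eq:simul_iter} is polynomial in the dynamical variables once $\theta$ is treated as a basic building block, and Lemma \ref{lem:theta} already gives $\theta \in \GSpace{\R \times \R_+^*}{\max(1, \lambda)}$; the coefficients $A, B$ are polynomials in $\lambda, \mu$. Combining these observations with the hypothesis $F \in \GSpace{[-M,M]^d}{s_F}$ via Lemmas \ref{lem:gspace_ext_class_stable} and \ref{lem:gspace_ext_comp_poly} propagates the bounds and yields the claimed $\GSpace$ witness.

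The main obstacle is the delicate choice of effective decay rate $\tilde\lambda = \lambda + \mu$ (rather than $\lambda$) in Lemma \ref{lem:branicky_perturbed}: without it, the possibly large initial offset $|z(k) - g| \leq 2 e^\mu$ would not collapse into a clean $O(e^{-\lambda})$ term. This is precisely why $A$ scales quadratically with $\lambda + \mu$ in \eqref{eq:simul_iter} (to beat the $(4e(\lambda+\mu))^{-1/4}$ factor from the $\theta$-integral and still leave $\tilde\lambda \geq \lambda+\mu$), and why $B$ scales linearly (to keep the off-phase drift below $e^{-\lambda}$). Once these calibrations are in place, the rest of the argument is routine bookkeeping along the two half-periods.
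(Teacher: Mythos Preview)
Your proposal is correct and follows essentially the same approach as the paper's own proof: induction on $k$, splitting each unit interval into two half-periods, using the smallness of $\theta$ on the ``off'' half to freeze one variable while applying Lemma~\ref{lem:branicky_perturbed} on the ``on'' half with effective decay rate $\lambda+\mu$ (so that the $2M$ initial offset collapses to $2e^{-\lambda}$), and then chaining the two half-period estimates via $(1+e^{-\lambda})^2\leqslant 1+3e^{-\lambda}$. The only cosmetic difference is that you integrate the off-phase derivative over a half-interval to get drift $e^{-\lambda}$ while the paper records the looser $2e^{-\lambda}$; both feed into the same recursion through $\Delta F(\varepsilon_k+2e^{-\lambda})$.
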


\subsection{Simulation of Turing machines --- step 3: Putting all pieces together}

In this section, we will use results of both \secref{sec:tm_simul_first} and \secref{sec:it_fn_diff_eq} to simulate Turing Machines with differential equations. Indeed, in \secref{sec:tm_simul_first} we showed that we could simulate a Turing Machine by iterating a robust real map, and in \secref{sec:it_fn_diff_eq} we showed how to efficiently iterate a robust map with differential equations. Now we just have to put these results together.

\begin{lemma}\label{lem:rec_seq_geom_arith}
Let $a>1$ and $b\geqslant0$, assume $u\in\R^\N$ satisfies $u_{n+1}\leqslant au_n+b$, $n\geqslant0$. Then
\[u_n\leqslant a^nu_0+b\frac{a^n-1}{a-1},\quad n\geqslant0\]
\end{lemma}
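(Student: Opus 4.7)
The plan is a routine induction on $n$, the kind of geometric-arithmetic bound one gets by unrolling a linear recurrence. I would first check the base case $n=0$: the right-hand side is $a^0 u_0 + b\cdot 0/(a-1) = u_0$, which matches $u_0 \leqslant u_0$ trivially.

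For the inductive step, assume the bound holds at rank $n$. Applying the hypothesis $u_{n+1}\leqslant au_n+b$ and the induction hypothesis gives
\[u_{n+1}\leqslant a\left(a^n u_0+b\frac{a^n-1}{a-1}\right)+b = a^{n+1}u_0 + b\frac{a^{n+1}-a}{a-1}+b.\]
Combining the last two terms over the common denominator $a-1$ yields $b\frac{a^{n+1}-a+(a-1)}{a-1}=b\frac{a^{n+1}-1}{a-1}$, which closes the induction.

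Alternatively (and equivalently), one can simply iterate the inequality: $u_n \leqslant a^n u_0 + b\sum_{k=0}^{n-1}a^k$, and recognize the geometric sum as $(a^n-1)/(a-1)$, which is well-defined because $a>1$.

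There is no real obstacle: the only thing to watch is that the algebraic simplification in the induction step uses $a>1$ so that dividing by $a-1$ is legitimate, and that $b\geqslant 0$ is needed only implicitly (to ensure the bound written as $b(a^n-1)/(a-1)$ has the right sign, though monotonicity of the recurrence is not used). The result is purely algebraic and independent of the analog computation machinery developed earlier in the paper; its role is to control the accumulation of the per-step error $5e^{-\lambda}$ through $\Delta F$ obtained in \thref{th:fn_simul_diff_eq} over the $T$ iterations of the Turing machine simulation.
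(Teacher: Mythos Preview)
The paper states this lemma without proof, treating it as a standard fact. Your induction argument is correct and is exactly the routine verification one would expect; the alternative unrolling into a geometric sum is equally valid. One small remark: the hypothesis $b\geqslant 0$ is in fact not used anywhere in your argument (the induction step only requires $a>0$ to preserve the inequality when multiplying), so your parenthetical about it being ``needed only implicitly'' could simply be dropped.
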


\begin{theorem}\label{th:tm_simual_diff_eq}
Let $\mathcal{M}$ be a Turing Machine as in \secref{sec:tm_simul_first}, then there are functions $s_f:I\to\R^{4}$ and  $f_\mathcal{M}\in\GSpace{\R^4}{s_f}$ such that for any sequence $c_0,c_1,\ldots,$ of configurations of $\mathcal{M}$ starting with input $e$:
\[\forall S, T\in\R_+^*,\forall n\leqslant T,\infnorm{[c_n]-f_\mathcal{M}(S,T,n,e)}\leqslant e^{-S}\]
and
\[\forall S, T\in\R_+^*,\forall n\leqslant T, s_f(S,T,n,e)=O(poly(S,T))\]
\end{theorem}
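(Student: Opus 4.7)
The plan is to combine Corollary~\ref{cor:tm_simul_iter} (robustness of $\overline{\operatorname{step}}_\mathcal{M}$) with Theorem~\ref{th:fn_simul_diff_eq} (iteration by ODE) through a careful choice of parameters as functions of $S$ and $T$. Set $F = \overline{\operatorname{step}}_\mathcal{M}(\tau,\lambda_1)$ where $\tau$ and $\lambda_1$ will depend on $S$ and $T$, feed this into the \eqref{eq:simul_iter} system with precision $\lambda_2$ and bound parameter $\mu$, and let $f_\mathcal{M}(S,T,n)=u(n)$ where $u$ is the resulting $u$-component. Since the rational encoding of every reachable configuration lies in $[0,1]\times[0,k]\times[0,1]\times[0,m]$ and the $u,z$ trajectories stay within constant distance of such encodings (as soon as the per-step error is under control), the constant $M$ in Theorem~\ref{th:fn_simul_diff_eq} can be taken to be an absolute constant, hence $\mu=O(1)$.

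Next I would plug the robustness estimate into the abstract recurrence. By Corollary~\ref{cor:tm_simul_iter}, there is a constant $K$ (depending only on $\mathcal M$) such that whenever $\bar c$ is within the admissible tolerance of a valid $c$, one has $\infnorm{\operatorname{step}_\mathcal M(c)-F(\bar c)}\leqslant K(e^{-\tau}+\infnorm{c-\bar c})$, so one may take $\Delta F(\varepsilon)=K(e^{-\tau}+\varepsilon)$. The recurrence in Theorem~\ref{th:fn_simul_diff_eq} then reads
\[\varepsilon_{k+1}\leqslant (1+3e^{-\lambda_2})K\,\varepsilon_k+\bigl[(1+3e^{-\lambda_2})K(e^{-\tau}+2e^{-\lambda_2})+5e^{-\lambda_2}\bigr],\]
which is of the form $\varepsilon_{k+1}\leqslant a\varepsilon_k+b$ with $a\leqslant 2K$ and $b$ dominated by $O(e^{-\tau}+e^{-\lambda_2})$. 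Applying Lemma~\ref{lem:rec_seq_geom_arith} with $\varepsilon_0=0$ (set $u(0)$ to the exact encoding $[c_0]$) gives $\varepsilon_n\leqslant b\,\tfrac{a^n-1}{a-1}\leqslant C(2K)^T(e^{-\tau}+e^{-\lambda_2})$ for $n\leqslant T$. Choosing $\tau=\lambda_1=\lambda_2=S+\alpha T$ for a sufficiently large constant $\alpha$ (depending only on $K$ and hence on $\mathcal M$) yields $\varepsilon_n\leqslant e^{-S}$, which is the desired bound.

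For the space bound, Corollary~\ref{cor:tm_simul_iter} gives $\overline{\operatorname{step}}_\mathcal{M}\in\GSpace{\cdots}{O(1)}$, so in the notation of Theorem~\ref{th:fn_simul_diff_eq} we have $s_F(M)=O(1)$. The theorem then gives $u\in\GSpace{(\R_+^*)^3}{\max(1,4(\lambda_2+\mu),s_F(M))}=\GSpace{(\R_+^*)^3}{O(\lambda_2+\mu)}$. Since $\lambda_2=O(S+T)$ and $\mu=O(1)$, the space bound on $u$ is $O(S+T)$; composing with the polynomial $(S,T,n)\mapsto(\lambda_2(S,T),\mu,n)$ via Lemma~\ref{lem:gspace_ext_comp_poly} yields $s_f(S,T,n)=O(\mathrm{poly}(S,T))$ as required.

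The main subtlety is ensuring that the induction of Theorem~\ref{th:fn_simul_diff_eq} can actually be invoked: its conclusion is vacuous unless at each step the pair $(\fiter{F}{k}(c_0),u(k))$ satisfies the hypothesis of Corollary~\ref{cor:tm_simul_iter}, i.e. $\infnorm{(x,y)-(\bar x,\bar y)}\leqslant\tfrac{1}{2k^2}-\tfrac{1}{k\lambda_1}$ and $\infnorm{(q,s)-(\bar q,\bar s)}\leqslant 1$. Avoiding circularity here requires first exhibiting an a priori bound on $\varepsilon_k$ (coming from the recurrence with $\tau,\lambda_1,\lambda_2$ chosen large enough that even the worst-case $a^T$ blow-up is dominated by $\tfrac{1}{2k^2}-\tfrac{1}{k\lambda_1}$), and only then invoking robustness step by step. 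A clean way is to do a nested induction: prove simultaneously that $\varepsilon_k\leqslant e^{-S}$ and that $u,z$ remain within the tolerance window on $[0,k+1]$, using Lemma~\ref{lem:theta} to keep $u$ within $2e^{-\lambda_2}$ of a constant on each half-period. Once this is in place, the three bullets of the statement (exactness of the iterates, error bound, and polynomial space) follow directly.
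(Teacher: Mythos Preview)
Your proposal is correct and follows essentially the same route as the paper's proof: instantiate $F=\overline{\operatorname{step}}_{\mathcal M}$, read off $\Delta F(\varepsilon)=K(e^{-\tau}+\varepsilon)$ from Corollary~\ref{cor:tm_simul_iter}, plug into the recurrence of Theorem~\ref{th:fn_simul_diff_eq}, bound it via Lemma~\ref{lem:rec_seq_geom_arith}, and choose the precision parameters linearly in $S+T$. Two minor remarks: the paper keeps the second argument of $\overline{\operatorname{step}}_{\mathcal M}$ fixed at the constant $4k$ rather than letting your $\lambda_1$ grow with $S,T$ (your choice works but is unnecessary), and the paper also sets $\tau=\lambda$ so that the additive term collapses to a single $e^{-\lambda}$; your explicit discussion of the circularity between the tolerance hypothesis of Corollary~\ref{cor:tm_simul_iter} and the a priori bound on $\varepsilon_k$ is actually more careful than the paper, which dispatches this in one line.
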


\begin{proof}
Let $\lambda>0$ (to be fixed later) and apply \thref{th:fn_simul_diff_eq} to $F=\overline{\operatorname{step}}_\mathcal{M}(\lambda,4k)$. By \corref{cor:tm_simul_iter}, $\exists K_1,K_2$ such that
\[\Delta F(\varepsilon)=K_1(e^{-\tau}+\varepsilon)\]
and
\[\forall x\in\Lambda=[-1,1]\times[-m,m]\times[-1,1]\times[-k,k], \infnorm{F(x)}\leqslant K_2\]
Let $M=K_2+1$. The recurrence relation of $\varepsilon$ (where $u,z$ are defined as in \eqref{eq:simul_iter})
\[\left\{\begin{array}{@{}c@{}l}
    \varepsilon_0&=\infnorm{u(0)-c_0}\\
    \varepsilon_{k+1}&=(1+3e^{-\lambda})\Delta F(\varepsilon_k+2e^{-\lambda})+5e^{-\lambda}
    \end{array}\right.\]
now simplifies to (using that $e^{-\lambda}\leqslant1$)
\begin{align*}
\varepsilon_{k+1}
  &\leqslant(1+3e^{-\lambda})K_1(e^{-\tau}+\varepsilon_k+2e^{-\lambda})+5e^{-\lambda}\\
  &\leqslant K_1(1+3e^{-\lambda})\varepsilon_k+2K_1(1+3e^{-\lambda})e^{-\lambda}+5e^{-\lambda}\\
  &\leqslant \underbrace{K_1(1+3e^{-\lambda})}_{a}\varepsilon_k+\underbrace{(8K_1+5)e^{-\lambda}}_{b}
\end{align*}
Now apply \lemref{lem:rec_seq_geom_arith} to get an explicit expression
\[\varepsilon_n\leqslant a^nu_0+b\frac{a^n-1}{a-1}\]
If we take as initial condition the exact rational configuration $[c_0]$, we immediately get that $u_0=0$. Let $K_3=4K_1$, then $a\leqslant K_3$. Pick $\lambda=S+T\log(K_3)+\log(8K_1+5)$. Then $\varepsilon_T\leqslant e^{-S}$.

We check with \thref{th:fn_simul_diff_eq} that $\infnorm{u(t)},\infnorm{z(t)}\leqslant M$ for $t\leqslant T$ since $\varepsilon_T\leqslant1$.

Finally, $f_\mathcal{M}=\left(\lambdafun{\lambda,\mu,t,e}{u(t)}\right)\in\GSpace{\R^4}{\underbrace{\lambdafun{\lambda,\mu,t,e}{\max(1,4(\lambda+\mu),s_F(M))}}_{s_f}}$ and $s_f=O(poly(S,T))$.
\end{proof}

\section{Acknowledgments} D.S. Gra\c{c}a was partially supported by
\emph{Funda\c{c}\~{a}o para a Ci\^{e}ncia e a Tecnologia} and EU FEDER
POCTI/POCI via SQIG - Instituto de Telecomunica\c{c}\~{o}es through
the FCT project PEst-OE/EEI/LA0008/2011.

Olivier Bournez and Amaury Pouly were partially supported by ANR
project SHAMAN, by DGA, and by DIM LSC DISCOVER project.

\bibliographystyle{plain}
\bibliography{ContComp}
\end{document}